\newcommand{\EnumP}{\mathrm{\mathsf EnumP}}
\newcommand{\IncP}{\mathrm{\mathsf IncP}}
\newcommand{\DelayP}{\mathrm{\mathsf DelayP}}
\newcommand{\cS}{\ensuremath{\mathcal{S}}}
\newcommand{\cH}{\ensuremath{\mathcal{H}}}
\newcommand{\cF}{\ensuremath{\mathcal{F}}}
\newcommand{\cE}{\ensuremath{\mathcal{E}}}
\newcommand{\1}{\ensuremath{\mathbb{1}}}
\newcommand{\0}{\ensuremath{\mathbb{0}}}
\newcommand{\Cl}{\ensuremath{Cl}}
\newcommand{\EnumClo}{\textsc{EnumClosure}}
\newcommand{\Clo}{\textsc{Closure}}
\newcommand{\ExtClo}{\textsc{ExtClosure}}
\newcommand{\zero}{\ensuremath{\mathbf{0}}}
\newcommand{\one}{\ensuremath{\mathbf{1}}}
\newcommand{\Pb}[3]{\noindent\textbf{#1} \\ \textbf{Input}: #2\\ \textbf{Output}: #3 \\}
\title{Efficient enumeration of solutions produced by closure operations}
\author{Arnaud Mary \inst{1} \and Yann Strozecki \inst{2}}
\institute{Universit\'e Lyon 1 ; CNRS, UMR5558, LBBE / INRIA Grenoble Rh\^one-Alpes - ERABLE \and Universit\'e de Versailles Saint-Quentin-en-Yvelines, DAVID Laboratories}
\begin{document}
\maketitle

\begin{abstract}
 In this paper we address the problem of generating all elements obtained
 by the saturation of an initial set by some operations. More precisely, we prove that
 we can generate the closure by polymorphisms of a boolean relation with a polynomial delay. This implies for instance that we can compute with polynomial delay the closure of a family of sets by any set of ``set operations'' (e.g. by union, intersection, difference, symmetric difference$\dots$). To do so, we prove that for any set of operations $\cF$, one can decide in polynomial time whether an elements belongs to the closure by $\cF$ of a family of sets. When the relation is over a domain larger than two elements, our generic enumeration method fails for some cases, since the associated decision problem is $\NP$-hard and we provide an alternative algorithm.
\end{abstract}

\section{Introduction}

In enumeration complexity we are interested in listing a set of elements, which can be  of exponential cardinality in the size of the input. The complexity of these problems is thus measured in term of the input size and output size.
The enumeration algorithm with a complexity polynomial in both the input and output are called
output polynomial or total polynomial time.
Another, more precise notion of complexity, is the \emph{delay} which measures the time between the production of
two consecutive solutions. We are especially interested in problems solvable with a  delay polynomial
in the input size, which are considered as the \emph{tractable problems} in enumeration complexity.
For instance, the maximal independent sets of a graph can be enumerated with polynomial delay~\cite{JohnsonP88}.

If we allow the delay to grow during the algorithm, we obtain incremental delay algorithms: the first $k$ solutions can be enumerated
in a time polynomial in $k$ and in the size of the input.
Many problems which can be solved with an incremental delay have the following form:
given a set of elements and a polynomial time function acting on tuples of elements, produce the closure of the set by the function.
For instance, the best algorithm to generate all circuits of a matroid is in incremental delay because it uses
some closure property of the circuits~\cite{1113216}.

Polynomial delay algorithms are in incremental delay. In this article, we try to understand when saturation problems
which are natural incremental delay problems can be in fact solved by a polynomial delay algorithm.
To attack this question we need to restrict the saturation operation. In this article, an element will be a vector over
some finite set and we ask the saturation operation to act \emph{coefficient-wise} and in the same way on each coefficient.
We prove that, when the vector is over the boolean domain, every possible saturation can be computed in polynomial delay.
To do that we study a decision version of our problem, denoted by $\Clo_{\mathcal{F}}$: given a vector $v$ and a set of vectors $\cS$
decide whether $v$ belongs to the closure of $\cS$ by the operations of $\mathcal{F}$. We prove $\Clo_{\mathcal{F}} \in \P$ for all set of operations $\mathcal{F}$ over the boolean domain.

When the domain is boolean, the problem can be reformulated in term of set systems or hypergraphs.
It is equivalent to generating the smallest hypergraph which contains a given hypergraph and which is closed by some operation.
We show how to efficiently compute the closure of an hypergraph by any family of set operations (any operation that is the composition of unions, intersections and complementations) on the hyperedges. This extends known methods such as the closure of a hypergraph by union, by union and intersection or the generation of the cycles of a graph by computing the closure of the fundamental cycles by symmetric difference.

The closure computation is also related to constraint satisfaction problems (CSP). Indeed,
the set of vectors can be seen as a relation $R$ and the problem of generating its closure by some operation $f$
is equivalent to the computation of the smallest relation $R'$ containing $R$ such that $f$ is a polymorphism of $R'$.
There are several works on enumeration in the context of CSP, which deal with enumerating solutions of a CSP in polynomial delay~\cite{creignou1997generating,schnoor2007enumerating,bulatov2012enumerating}.
The simplest such result~\cite{creignou1997generating} states that in the boolean case, there is a polynomial delay algorithm if and only if the constraint language is Horn, anti-Horn, bijunctive or affine.
Our work is completely unrelated to these results, since we are not interested in the solutions of CSPs but only in generating the closure of relations. However, we use tools from CSPs such as the Post's lattice~\cite{post1941two}, used by Schaefer in its seminal paper~\cite{schaefer1978complexity}, and the Baker-Pixley theorem~\cite{baker1975polynomial}.

The main theorem of this article settles the complexity of a whole family
of decision problems and implies, quite surprisingly, that the backtrack search is enough to obtain a polynomial delay algorithm
to enumerate the closure of boolean vectors. For all these enumeration problems, compared to the naive saturation algorithm, our method has a better time complexity (even from a practical point of view) and a better space complexity (polynomial rather than exponential).

Moreover these algorithms may serve as a good tool to design other enumeration algorithms. One only has to express an enumeration problem
as the closure of some sufficiently small and easy to compute set of elements and then to apply the method described in this article.
Similarly,  given a potentially large set of objects, it might be convenient to represent
this set in a compact way with only some "essential" elements. We would like that these core elements act as a basis in the sense that any other element can be found by composing one or several operations on those base elements. If
one can define such a basis with respect to a set of operations, a natural question arises: How to compute efficiently all the elements from
the basis ? The results of this paper answer this question when the considered objects are
subsets of a ground set and when the operations are "set operations".


 Finally, besides the generic enumeration algorithm, we try to give for each closure rule an algorithm with the best possible complexity. In doing so, we illustrate several classical methods used to enumerate objects such as amortized backtrack search, reverse search, Gray code $\dots$

\subsection{Our results}

In Sec. \ref{sec:preliminary}, we define enumeration complexity, our problem and the backtrack search.
In Sec. \ref{sec:boolean}, we use Post's lattice, restricted through suitable reductions between clones, to
determine the complexity of $\Clo_{\mathcal{F}}$ for all set of binary operations $\mathcal{F}$.
It turns out that there are only a few types of closure operations:
\begin{enumerate}
 \item the monotone operations, in subsection \ref{sec:monotone}
 \item the addition over $\mathbb{F}_2$, in subsection \ref{sec:algebra}
 \item the set of all operations, or almost all, in subsection \ref{sec:all}
 \item an infinite hierarchy of increasingly hard to enumerate closures, related to the majority function, in subsection \ref{sec:threshold}
 \item the few limit cases of the previous hierarchy in subsection \ref{sec:limit}
\end{enumerate}

Finally, in Sec.~\ref{sec:largerdomains}, we give polynomial delay algorithm for three classes of closure operation over any domain
and prove that the method we use in the boolean case fails in one case.

\section{Preliminary}\label{sec:preliminary}
\subsection{Basic notations}
Given $n\in \mathbb{N}$, $[n]$ denotes the set $\{1,...,n\}$.
For a set $D$ and a vector $v\in D^{n}$, we denote by $v_i$ the $i^{\text{th}}$ coordinate of $v$. Let $i,j\in [n]$, we denote by $v_{i,j}$ the vector $(v_i,v_j)$. More generally,  for a subset $I=\{i_1,...,i_k\}$ of $[n]$ with $i_1<...<i_k$ we denote by $v_I$ the vector $(v_{i_1},..., v_{i_k})$. Let $\cS$ be a set of vector we denote by $\cS_I$ the set $\{ v_I \mid v \in \cS\}$.
The characteristic vector $v$ of a subset $E$ of $[n]$, is the vector in $\{0,1\}^{n}$ such that $v_i=1$ if and only if $i\in X$.

\subsection{Complexity}

In this section, we recall basic definitions about enumeration problems and their
 complexity, for further details and examples see~\cite{phd_strozecki}.

 Let $\Sigma$ be some finite alphabet.
 An \emph{enumeration problem} is a function $A$ from $\Sigma^*$ to $\mathcal{P}(\Sigma^*)$.
 That is to each input word, $A$ associates a set of words. An algorithm which solves
 the enumeration problem $A$ takes any input word $w$ and produces the set $A(w)$ word by word and \emph{without redundancies}.
 We always require the sets $A(w)$ to be finite. We may also ask $A(w)$ to contain only words of polynomial size in the size of $w$ and
  that one can test whether an element belongs to $A(w)$  in polynomial time. If those two conditions hold,
  the problem is in the class $\EnumP$ which is the counterpart of $\NP$ for enumeration.
   Because of this relationship to $\NP$, we often call solutions the elements we enumerate.

The computational model is the random access machine model (RAM) with addition, subtraction
and multiplication as its basic arithmetic operations. We have additional output registers,
and when a special OUTPUT instruction is executed, the contents of the output registers is outputted and considered as an element of the outputted set. We choose the RAM model, because it is closer to real computers and because it allows to store
all found solutions and to look for one in logarithmic time in the number of solutions.

The \emph{delay} is the time between the productions of two consecutive solutions.
Usually we want to bound the delay of an algorithm for all pairs of consecutive solutions
and for all inputs of the same size. If this delay is polynomial in the size of the input, then
we say that the algorithm is in \emph{polynomial delay} and the problem is in the class $\DelayP$.
If the delay is polynomial in the input and the number of already generated solutions, we say that the algorithm is in
\emph{incremental delay} and the problem is in the class $\IncP$. By definition we have $\DelayP \subset \IncP$.
Moreover $(\DelayP \cap \EnumP) \neq (\IncP \cap \EnumP)$ modulo the exponential time hypothesis~\cite{strozecki2015note}.
In practice problems in $\DelayP$ are much more tractable, often because they can be solved with a memory polynomial in
the size of the input. Note that in a polynomial delay algorithm we allow a polynomial precomputation step,
usually to set up data structures, which is not taken into account in the delay. This is why we can have a delay smaller than the size of the input.
%

We now explain a very classical and natural enumeration method called the \emph{Backtrack Search} (sometimes also called the \emph{flashlight method})
used in many previous articles~\cite{read1975bounds,strozecki2013}.
We represent the solutions we want to enumerate as vectors of size $n$ and coefficients in $[d]$. In practice solutions are often
subsets of $[n]$ which means that $k = 1$ and the vector is the characteristic vector of the subset.

The  enumeration algorithm is a depth first traversal of a tree whose nodes are partial solutions.
The nodes of the tree will be all vectors $v$ of size $l$, for all $l \leq n$, such that
$v = w_{[l]}$ and $w$ is a solution. The children of the node $v$ will be
the vectors of size $l+1$, which restricted to $[l]$ are equal to $v$.
The leaves of this tree are the solutions of our problem, therefore a depth first traversal will visit all
leaves and yield all solutions. We want an enumeration algorithm with a delay polynomial in $n$.
Since a branch of the tree is of size $n$, we need to be able to find the children of a node in a time polynomial in $n$
to obtain a polynomial delay. The delay also depends linearly on $d$, but in the rest of the paper $d$ will be constant.
Therefore the problem is reduced to the following \emph{decision} problem:
given $v$ of size $l$ is there $w$ a solution such that $v = w_{[l]}$ ?
This problem is called the \emph{extension problem} associated to the enumeration problem.

\begin{proposition}\label{prop:partialsolutions}
 Given an enumeration problem $A$, such that for all $w$, $A(w)$ can be seen as vectors of size $n$ and coefficients in $[d]$,
 with $n$ and $d$ polynomially related to $|w|$.
 If the extension problem associated to $A$ is in $\P$, then $A$ is in $\DelayP$.
\end{proposition}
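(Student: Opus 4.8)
The plan is to realise the backtrack search described above as an explicit depth-first traversal of the tree $T$ of partial solutions, using the assumed polynomial-time decision procedure for the extension problem as a pruning oracle. Recall that the nodes of $T$ at depth $l$ are exactly the vectors $v$ of length $l$ such that $v = w_{[l]}$ for some solution $w$; equivalently, by definition of the extension problem, $v$ is a node of $T$ if and only if the extension oracle answers \emph{yes} on $v$ (together with the original input $w$). The algorithm starts at the root (the empty vector) and, at a node $v$ of length $l < n$, considers each of the $d$ candidate children $v c$ obtained by appending a coefficient $c \in [d]$; it calls the extension oracle on each candidate and recurses only into those children for which the answer is \emph{yes}. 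When a node of length $n$ is reached, it is a full vector that belongs to $T$, hence a solution, and it is output.

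First I would check correctness. Every length-$n$ node of $T$ is by definition a solution, and conversely every solution, viewed as a vector of length $n$, is a leaf of $T$; since the traversal is a depth-first search that enters precisely the children certified by the oracle, it reaches every leaf exactly once. Thus each solution is produced exactly once and no redundant output occurs, as required by the definition of an enumeration algorithm.

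It remains to bound the delay, which is the main point. Let $q(|w|)$ be a polynomial bounding the running time of the extension algorithm; since any partial solution has length at most $n$ and $n$ is polynomially related to $|w|$, each oracle call runs in time $q(|w|)$, and at each node the algorithm performs at most $d$ such calls, for a cost of $O(d\, q(|w|))$ per node. The decisive observation is that the oracle guarantees that every node of $T$ has at least one descendant at depth $n$: the search therefore never enters a subtree devoid of solutions, so no superpolynomial detour through an empty part of the tree can occur. Consequently, between two consecutive outputs the traversal moves along a path of $T$ of length at most $2n$ --- it climbs from one leaf to the lowest common ancestor of the two leaves and descends again --- visiting $O(n)$ nodes, each processed in time $O(d\, q(|w|))$. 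The delay is thus $O(n\, d\, q(|w|))$, which is polynomial in $|w|$, and the same bound applies to the initial descent producing the first solution, which may be charged to the polynomial precomputation. Hence $A \in \DelayP$.

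The hard part is precisely this pruning argument: without the extension oracle a naive traversal could explore exponentially many dead-end partial solutions before reaching the next leaf, which would destroy the delay bound. The hypothesis that the extension problem lies in $\P$ is exactly what rules this out, and I expect the remaining details --- bounding the number of nodes visited between consecutive leaves, and checking that $d$ and $n$ are polynomially bounded in $|w|$ so that each oracle call is itself polynomial --- to be routine.
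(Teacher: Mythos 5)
Your proof is correct and follows exactly the approach the paper intends: the backtrack search over the tree of partial solutions, with the polynomial-time extension oracle used to prune, yielding a delay of $O(n\,d\,q(|w|))$ --- which matches the paper's own remark that the delay is $n$ times the cost of the extension problem times $d$. Your explicit accounting of the climb-and-descend path between consecutive leaves is just a spelled-out version of the argument the paper leaves implicit.
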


More precisely, the delay is $n$ times the complexity of solving the extension problem times $d$.
We will see in the next part, that the complexity of solving the extension problem can be amortized
over a whole branch of the tree, since we solve it many times, using well chosen data structures.

There is a second enumeration method to design a polynomial delay algorithm, named the \emph{supergraph} method~\cite{avis1996reverse}.
The idea is to organize the solutions (and not the partial solutions) as a DAG instead of a tree, and to traverse this DAG.
For that we should be able to visit all the successors of a node in polynomial time. To avoid to enumerate several times a node of the DAG, the  \emph{reverse search} method is often used~\cite{avis1996reverse}. It consists on defining a canonical parent computable in polynomial time for each node. This method may have a better delay than the backtrack search, because the traversal goes over solutions only.

\subsection{Closure of families by set operations}

In this subsection, we define the family of enumeration problems we want to solve.
We fix $D$ a finite domain.
Given a $t$-ary operation $f$  (a function from $D^t$ to $D$), $f$ can be naturally extended to a $t$-ary operation over vectors of the same size. For a $t$-uples of vectors of size $n$ $v^{1},\dots v^{t}$, $f$ will then acts coefficient-wise, that is for all $i\leq n$, $f(v^{1},\dots, v^{t})_i = f(v^{1}_i,\dots, v^{t}_i)$.

\begin{definition}
Let $\mathcal{F}$ be a finite set of operations over $D$.
Let $\cS$ be a set of vectors of size $n$ over $D$.
Let $\mathcal{F}^i(\cS) = \{f(v_1,\dots, v_t) \mid v_1,\dots, v_t \in \cF^{i-1}(S) \text{ and } f\in \mathcal{F} \}$ and $\mathcal{F}^0(\cS) = \cS$.
The closure of $\cS$ by $\mathcal{F}$ is $\Cl_\mathcal{F}(\cS) = \displaystyle{\cup_i \mathcal{F}^i(\cS)}$.
\end{definition}

Remark that $\Cl_{\mathcal{F}}(\cS)$ is also the smallest set which contains $\cS$ and which is closed by the operations of $\mathcal{F}$.
The set $\Cl_\mathcal{F}(\cS)$ is invariant under the operations of $\mathcal{F}$: these operations are called \emph{polymorphisms} of the set $\Cl_\mathcal{F}(\cS)$, a notion which comes from universal algebra.

As an illustration, assume that $D = \{0,1\}$ and that  $\mathcal{F} = \{ \vee \}$.
Then the elements of $\cS$ can be seen as subsets of $[n]$ (each vector of size $n$ is the characteristic vector of a subset of $[n]$) and $\Clo_{\{\vee\}}(\cS)$ is the closure by union of all
sets in $S$. Let $\cS = \{ \{1,2,4\},  \{2,3\},  \{1,3\} \}$ then $\Cl_{\{\vee\}}(\cS)= \{ \{1,2,4\},  \{1,2,3,4\}, \{2,3\},  \{1,3\},  \{1,2,3\}\}$.
Remark that $\Cl_{\{\vee\}}(\cS)$ is indeed closed by union, that is $\vee$ is a polymorphism of  $\Cl_{\{\vee\}}(\cS)$.

We now introduce the family of enumeration problems, parametrized by $\mathcal{F}$ a set of operations over $D$,
which we will try to solve in this article.

\medskip
\Pb{$\EnumClo_\mathcal{F}$}{A set of vectors $\cS$}{$\Cl_\mathcal{F}(S)$}

In all this article, we will denote the size of the vectors of $\cS$ by $n$ and the cardinal of $\cS$
by $m$. We introduce two related decision problems. First, the extension problem associated to a set of operations $\mathcal{F}$, is the problem $\ExtClo_\mathcal{F}$:
given $\cS$ a set of vectors of size $n$,  and a vector $v$ of size $l \leq n$, is there a vector $v' \in \Cl_\mathcal{F}(\cS)$ such that $v_{[l]} = v'$.
Second, the closure problem, denoted by $\Clo_\mathcal{F}$, is a restricted version of the extension problem where $v$ is of size $n$.

\begin{proposition}
 If $\Clo_\mathcal{F} \in \P$ then $\EnumClo_\mathcal{F} \in \DelayP$.
\end{proposition}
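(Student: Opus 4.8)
The plan is to route everything through Proposition \ref{prop:partialsolutions}. That proposition says $\EnumClo_\mathcal{F}$ lands in $\DelayP$ as soon as its associated extension problem is in $\P$: the solutions of $\EnumClo_\mathcal{F}$ are vectors of size $n$ over the fixed finite domain $D$, so $d = |D|$ is constant and $n$ is bounded by the input size, and the extension problem attached to $\EnumClo_\mathcal{F}$ is exactly $\ExtClo_\mathcal{F}$. Hence it suffices to prove $\ExtClo_\mathcal{F} \in \P$ from the hypothesis $\Clo_\mathcal{F} \in \P$, which is just the special case $l = n$ of $\ExtClo_\mathcal{F}$.

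The crucial observation, and the only place where the structure of the closure enters, is that taking the closure commutes with restricting to a set of coordinates: for any $I \subseteq [n]$ I claim
$$\Cl_\mathcal{F}(\cS)_I = \Cl_\mathcal{F}(\cS_I).$$
This holds precisely because each operation of $\mathcal{F}$ acts coordinate-wise, so that $f(v^1,\dots,v^t)_I = f(v^1_I,\dots,v^t_I)$ for every $f \in \mathcal{F}$. I would prove it by induction on $i$, establishing $\mathcal{F}^i(\cS)_I = \mathcal{F}^i(\cS_I)$: the base case is $\cS_I = \cS_I$, and in the inductive step the coordinate-wise action slides the restriction through $f$, while the induction hypothesis identifies the set $\{v^j_I \mid v^j \in \mathcal{F}^{i-1}(\cS)\}$ with $\mathcal{F}^{i-1}(\cS_I)$. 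Taking the union over all $i$ yields the displayed equality.

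Given this, the reduction from $\ExtClo_\mathcal{F}$ to $\Clo_\mathcal{F}$ is immediate. A vector $v$ of size $l$ is extendable, i.e.\ there is $v' \in \Cl_\mathcal{F}(\cS)$ with $v = v'_{[l]}$, if and only if $v \in \Cl_\mathcal{F}(\cS)_{[l]}$, which by the commutation identity equals $\Cl_\mathcal{F}(\cS_{[l]})$. Now $\cS_{[l]}$ is a set of vectors of size $l$ and $v$ is itself a vector of size $l$, so deciding $v \in \Cl_\mathcal{F}(\cS_{[l]})$ is exactly an instance of $\Clo_\mathcal{F}$. Since $\cS_{[l]}$ is computable in polynomial time, simply by dropping the last $n-l$ coordinates of each vector of $\cS$, this is a polynomial-time reduction. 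Under $\Clo_\mathcal{F} \in \P$ it gives $\ExtClo_\mathcal{F} \in \P$, and Proposition \ref{prop:partialsolutions} then delivers $\EnumClo_\mathcal{F} \in \DelayP$.

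The only delicate point is the commutation lemma, and even there the coordinate-wise action makes the argument short; the thing to watch is the set equality in the inductive step, namely that every element of $\mathcal{F}^{i-1}(\cS_I)$ is genuinely the restriction of some element of $\mathcal{F}^{i-1}(\cS)$, which again follows from the induction hypothesis. Everything else, the projection, the polynomial-time reduction, and the appeal to Proposition \ref{prop:partialsolutions}, is routine.
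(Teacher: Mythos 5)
Your proof is correct and takes essentially the same route as the paper's: reduce $\ExtClo_\mathcal{F}$ to $\Clo_\mathcal{F}$ by observing that, since the operations act coordinate-wise, $v$ of size $l$ extends to an element of $\Cl_\mathcal{F}(\cS)$ if and only if $v \in \Cl_\mathcal{F}(\cS_{[l]})$, and then conclude via Proposition~\ref{prop:partialsolutions}. The only difference is that you spell out the commutation identity $\Cl_\mathcal{F}(\cS)_I = \Cl_\mathcal{F}(\cS_I)$ with an explicit induction, a detail the paper asserts in a single sentence.
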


\begin{proof}
$\ExtClo_\mathcal{F}$ can be reduced to $\Clo_\mathcal{F}$.
Indeed, given a vector $v$ of size $l$, because the operations of $\mathcal{F}$ act coordinate-wise,
the two following predicates are equivalent:
\begin{itemize}
 \item  $\exists v' \in \Cl_\mathcal{F}(\cS)$ such that $v_{[l]} = v'$
 \item $v \in  \Cl_\mathcal{F}(\cS_{[l]})$
\end{itemize}
Therefore if $\Clo_\mathcal{F} \in \P$ then we have also  $\ExtClo_\mathcal{F} \in \P$.
 We can use Prop.~\ref{prop:partialsolutions} to conclude.
\qed \end{proof}

We have introduced an infinite family of problems, whose complexity we want to determine.
Several families of operations may always produce the same closure. To deal with that, we need to introduce the notion of functional clone.

\begin{definition}
 Let $\mathcal{F}$ be a finite set of operations over $D$, the functional clone generated by
 $\mathcal{F}$, denoted by $<\mathcal{F}>$, is the set of operations obtained by any composition of the operations of $\mathcal{F}$ and of the projections $\pi_k^n : D^n \to D$
 defined by $\pi_k^n(x_1,\dots,x_n) = x_k$.
\end{definition}

This notion is interesting, because two sets of functions which generate the same clone applied to the same set produce the same closure.

\begin{lemma}
For all set of operations $\mathcal{F}$ and all set of vectors $\cS$, $\Cl_\mathcal{F}(\cS) = \Cl_{<\mathcal{F}>}(\cS)$.
\end{lemma}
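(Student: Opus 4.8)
The plan is to prove the two inclusions separately, relying on the characterization (stated in the remark above) that $\Cl_\mathcal{F}(\cS)$ is the smallest set containing $\cS$ and closed under every operation of $\mathcal{F}$. The inclusion $\Cl_\mathcal{F}(\cS) \subseteq \Cl_{<\mathcal{F}>}(\cS)$ is the easy one: since every $f \in \mathcal{F}$ also belongs to $<\mathcal{F}>$, any set closed under $<\mathcal{F}>$ is in particular closed under $\mathcal{F}$. Thus $\Cl_{<\mathcal{F}>}(\cS)$ contains $\cS$ and is closed under $\mathcal{F}$, and minimality of $\Cl_\mathcal{F}(\cS)$ gives the inclusion.

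For the reverse inclusion $\Cl_{<\mathcal{F}>}(\cS) \subseteq \Cl_\mathcal{F}(\cS)$, it suffices to show that $\Cl_\mathcal{F}(\cS)$ is itself closed under every operation $g \in <\mathcal{F}>$; then, since it also contains $\cS$, minimality of $\Cl_{<\mathcal{F}>}(\cS)$ yields the inclusion. I would establish this closure by structural induction on the way $g$ is built from the operations of $\mathcal{F}$ and the projections $\pi_k^n$. For the base cases, $\Cl_\mathcal{F}(\cS)$ is closed under each $f \in \mathcal{F}$ by definition, and it is trivially closed under every projection since $\pi_k^n(v_1, \dots, v_n) = v_k$ already lies in $\Cl_\mathcal{F}(\cS)$ whenever the $v_i$ do.

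For the inductive step, write $g = f(g_1, \dots, g_t)$ with $f \in \mathcal{F}$ of arity $t$ and $g_1, \dots, g_t \in <\mathcal{F}>$ already known to preserve $\Cl_\mathcal{F}(\cS)$. Given $v_1, \dots, v_n \in \Cl_\mathcal{F}(\cS)$, each $g_j(v_1, \dots, v_n) \in \Cl_\mathcal{F}(\cS)$ by the induction hypothesis, and applying $f$ keeps the result in $\Cl_\mathcal{F}(\cS)$; the identity $g(v_1, \dots, v_n) = f(g_1(v_1, \dots, v_n), \dots, g_t(v_1, \dots, v_n))$ holds at the level of vectors precisely because all operations act coefficient-wise, so it is enough to check it coordinate by coordinate, where it is merely the definition of composition.

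The only real care needed --- and the point I expect to be the main obstacle --- is making this induction rigorous: the clone is defined as the closure under ``any composition,'' so one must fix an inductive description of these compositions (for instance, induction on the number of composition steps) and verify that the passage from scalar composition to coefficient-wise vector operations commutes with that description. Once this bookkeeping is in place, every step of the argument reduces to the defining property of $\Cl_\mathcal{F}(\cS)$ as a set closed under $\mathcal{F}$.
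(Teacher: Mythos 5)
Your proof is correct. The paper actually states this lemma without any proof, treating it as immediate from the definitions; your double-inclusion argument---minimality of each closure, plus a structural induction showing that $\Cl_{\mathcal{F}}(\cS)$ is preserved by every operation of $<\mathcal{F}>$ (projections being trivial because, applied coefficient-wise, $\pi_k^n$ just returns the $k$-th input vector, and compositions being handled by the fact that coefficient-wise action commutes with composition)---is exactly the routine verification the paper leaves implicit, with the one bookkeeping caveat you already identified: in the inductive step the outer operation must be allowed to range over previously constructed clone members (or one flattens terms so the outermost symbol is in $\mathcal{F}$ or is a projection), not only over $\mathcal{F}$ itself.
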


The number of clones over $D$ is infinite even when $D$ is the boolean domain.
However, in this case the clones form a countable lattice, called the Post's lattice~\cite{post1941two}.
Moreover there is a \emph{finite} number of well described clones plus a few very regular infinite family of clones.

\section{The Boolean Domain}\label{sec:boolean}

In this part we will prove our main theorem on the complexity of $\Clo_\mathcal{F}$, when the domain is boolean (of size $2$).
An instance of one such problem, denoted  by $\cS$, will be indifferently seen as a set of vectors of size $n$ or a set of subsets of $[n]$.

\begin{theorem}
 Let $\mathcal{F}$ be any fixed finite set of operations over the boolean domain, then $\Clo_\mathcal{F} \in \P$ and $\EnumClo_\mathcal{F} \in \DelayP$.
\end{theorem}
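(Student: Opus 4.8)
The plan is to prove $\Clo_\mathcal{F} \in \P$; the statement $\EnumClo_\mathcal{F} \in \DelayP$ then follows at once from the preceding proposition. Since $\Cl_\mathcal{F}(\cS) = \Cl_{\langle\mathcal{F}\rangle}(\cS)$, the closure depends only on the functional clone generated by $\mathcal{F}$, so it suffices to decide membership in $\Cl_\mathcal{C}(\cS)$ for every clone $\mathcal{C}$ of Post's lattice. I would organise the whole argument as a case analysis over this lattice, first using two reductions to collapse it to a handful of representatives. The first is duality: writing $f^d(x)=\neg f(\neg x)$ and $\neg\cS$ for the coordinatewise complement of $\cS$, one has $\Cl_{\mathcal{F}^d}(\cS)=\neg\,\Cl_\mathcal{F}(\neg\cS)$, so $\Clo_\mathcal{C}$ and $\Clo_{\mathcal{C}^d}$ are polynomial-time equivalent and dual clones are handled together. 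The second concerns constants: adding the constant $0$ (resp.\ $1$) to $\mathcal{C}$ amounts to adding the all-zero (resp.\ all-one) vector to the instance, since $\Cl_{\langle\mathcal{C}\cup\{0\}\rangle}(\cS)=\Cl_\mathcal{C}(\cS\cup\{\mathbf{0}\})$; hence any clone containing constants reduces to its constant-free part. After these reductions the lattice collapses to the families singled out in the introduction.

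For the easy regions the membership test is direct. For a monotone clone such as $\langle\vee\rangle$, $v$ lies in the closure iff it equals the join of those generators that lie below it coordinatewise, which is checkable in time $O(mn)$; the $\langle\wedge\rangle$ and $\langle\vee,\wedge\rangle$ cases are analogous. For the affine clone $\langle\oplus\rangle$, $\Cl_{\{\oplus\}}(\cS)$ is the affine span of $\cS$ over $\mathbb{F}_2$, so membership is decided by Gaussian elimination. For the top of the lattice, the closure of any nonempty $\cS$ is either all of $\{0,1\}^n$ or an explicitly describable set, so the test is trivial.

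The heart of the argument, and the main obstacle, is the infinite near-unanimity hierarchy. Here I would invoke the Baker--Pixley theorem: if $\mathcal{C}$ contains a $d$-ary near-unanimity operation, then every relation invariant under $\mathcal{C}$ is the conjunction of its projections onto $(d-1)$-element sets of coordinates. Because the operations act coordinatewise, projection commutes with closure, that is $(\Cl_\mathcal{C}(\cS))_I=\Cl_\mathcal{C}(\cS_I)$ for every $I$ (one inclusion since a projection of a closed set is closed, the other since majority-style operations on projections are projections of the same operation applied to witnesses). Combining the two facts gives $v\in\Cl_\mathcal{C}(\cS)$ iff $v_I\in\Cl_\mathcal{C}(\cS_I)$ for all $I$ with $|I|=d-1$. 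There are only $O(n^{d-1})$ such $I$, and each $\cS_I$ lives in the bounded domain $\{0,1\}^{d-1}$, so every local closure is computed by naive saturation in time depending on $d$ alone; the whole test is therefore polynomial, with an exponent that grows with the level $d$ of the hierarchy, matching the phrase ``increasingly hard to enumerate''.

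The delicate point that remains is the limit clones $S_0,S_1$ of this hierarchy, which contain no near-unanimity operation and so escape Baker--Pixley entirely. For these I expect to need an ad hoc characterisation of $\Cl_\mathcal{C}(\cS)$, obtained by analysing directly how their generating threshold-type operations produce new vectors, and to verify by hand that membership stays polynomial. Assembling all cases yields $\Clo_\mathcal{C}\in\P$ for every clone $\mathcal{C}$, hence $\Clo_\mathcal{F}\in\P$ for every finite $\mathcal{F}$, and the theorem follows.
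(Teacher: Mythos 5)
Your overall architecture is the same as the paper's: reduce to clones via $\Cl_{\cF}(\cS)=\Cl_{\langle\cF\rangle}(\cS)$, collapse Post's lattice by the duality and constant reductions, handle the pure join/meet clones by a direct test, the affine clones by linear algebra, and everything containing a near-unanimity term by Baker--Pixley together with the fact that projection commutes with closure. That part is sound, and your Baker--Pixley argument (the $O(n^{d-1})$ index sets, each local closure saturated in constant time over a bounded domain) matches the paper's corollary almost verbatim. Two smaller points before the main one. First, your claim that $\langle\vee,\wedge\rangle$ is ``analogous'' to $\langle\vee\rangle$ is not literally true: for $\cS=\{110,011\}$ the vector $010$ is in the closure but no generator lies below it, so the ``join of generators below $v$'' test fails; this case is nevertheless rescued because $maj\in\langle\vee,\wedge\rangle$, so it falls under your Baker--Pixley case (the paper makes exactly this remark). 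Second, your affine case treats only the binary sum; the reduced lattice also contains $L_2=\langle x+y+z\rangle$, whose closure consists of sums of an \emph{odd} number of generators, so the Gaussian-elimination test needs the extra parity condition that the paper adds.

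The genuine gap is the one you yourself flag and then defer: the limit clones $S_{10}=\langle x\wedge(y\vee z)\rangle$ and $S_{12}=\langle x\wedge(y\to z)\rangle$. These contain no near-unanimity operation, are not semilattices, and are not affine, so none of your worked-out cases applies; saying you ``expect to need an ad hoc characterisation'' and would ``verify by hand that membership stays polynomial'' states the problem rather than solving it. This is precisely where the paper must do real work beyond Baker--Pixley: it proves that $v\in\Cl_{S_{12}}(\cS)$ if and only if (i) some $w\in\cS$ satisfies $\1(v)\subseteq\1(w)$ and (ii) for every $(k,i)\in\1(v)\times\0(v)$ some $w\in\cS$ has $w_{k,i}=(1,0)$ or $(0,1)$; and that $v\in\Cl_{S_{10}}(\cS)$ holds under the same conditions with $(1,0)$ alone. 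One direction of each is a preservation argument, and the other an explicit inductive construction of witnesses from terms such as $w\wedge(u\to w')$ and $t\wedge(y\vee y')$, which occupies a substantial part of the paper. Without these two characterisations (or a substitute for them), your case analysis does not exhaust the reduced lattice, and the theorem is not yet proved.
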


There is also a uniform version of the decision problem, where $\mathcal{F}$ is given as input.
It turns out that this problem is $\NP$-hard as proven in Section \ref{sec:threshold}.

To prove our main theorem, we will prove that $\Clo_\mathcal{F} \in \P$,
for each  clone $\mathcal{F}$ of the Post's lattice. We first show that for some $\mathcal{F}$
the problem $\Clo_\mathcal{F}$ can be reduced to $\Clo_\mathcal{G}$ where $\mathcal{G}$ is another clone obtained from $\mathcal{F}$.
This helps to reduce the number of cases we need to consider.

To an operation $f$ we can associate its dual $\overline{f}$ defined by $\overline{f}(s_1,\dots,s_t) = \neg{f(\neg{s_1},\dots,\neg{s_t})}$. If $\mathcal{F}$ is a set of operations, $\overline{\mathcal{F}}$ is the set of duals of operation in $\mathcal{F}$.
We denote by $\zero$ and $\one$ the constant functions which always return $0$ and $1$. By a slight abuse of notation,
we will also denote by $\zero$ the all zero vector and by $\one$ the all one vector.

\begin{proposition}\label{prop:postsimpl}
The following problems can be polynomially reduced to $\Clo_{\mathcal{F}}$:
\begin{enumerate}
 \item $\Clo_{\mathcal{F} \cup \{\zero\}}$, $\Clo_{\mathcal{F} \cup \{\one\}}$, $\Clo_{\mathcal{F} \cup \{\zero,\one\}}$
 \item $\Clo_{\overline{\mathcal{F}}} $
 \item $\Clo_{\mathcal{F}\cup \{\neg \}} $ when $\mathcal{F} = \overline{\mathcal{F}}$
\end{enumerate}
\end{proposition}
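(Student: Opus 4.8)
The plan is to establish each of the three reductions separately by exhibiting, for each one, a polynomial-time transformation of instances that preserves membership in the closure. The unifying idea is that the closure operation interacts predictably with the three syntactic manipulations of $\mathcal{F}$ being considered: adding constants, dualizing, and adding negation under a self-duality hypothesis. I will rely throughout on the fact (stated earlier) that $\Cl_\mathcal{F}(\cS) = \Cl_{<\mathcal{F}>}(\cS)$, so I am free to think of these as operations on the generated clones rather than on the literal function sets.

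\textbf{Reduction 1 (adding constants).} For $\Clo_{\mathcal{F}\cup\{\zero\}}$, an instance is a pair $(\cS, v)$ asking whether $v \in \Cl_{\mathcal{F}\cup\{\zero\}}(\cS)$. Since $\zero$ and $\one$ act coordinate-wise and produce the all-zero and all-one vectors, I would reduce to $\Clo_\mathcal{F}$ on the augmented set $\cS' = \cS \cup \{\zero\}$ (respectively $\cS\cup\{\one\}$, or $\cS\cup\{\zero,\one\}$). The key claim is $\Cl_{\mathcal{F}\cup\{\zero\}}(\cS) = \Cl_\mathcal{F}(\cS\cup\{\zero\})$: once the constant vector $\zero$ is physically present in the set, any later application of the nullary/constant operation $\zero$ can be simulated by simply reusing that vector, and conversely the presence of $\zero$ in the closure on the left follows from applying the constant operation. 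I expect this to follow by a straightforward double inclusion, proved by induction on the stages $\mathcal{F}^i$ of the closure. The reduction is clearly polynomial since it only adds one or two vectors.

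\textbf{Reduction 2 (duals).} To reduce $\Clo_{\overline{\mathcal{F}}}$ to $\Clo_\mathcal{F}$, I would use the coordinate-wise complementation map $\neg$ that sends a vector $v$ to $\neg v$. The essential identity is that $\neg$ conjugates the action of $f$ and $\overline{f}$: by definition $\overline{f}(s_1,\dots,s_t) = \neg f(\neg s_1,\dots,\neg s_t)$, so applying $\overline{f}$ to vectors and then complementing is the same as complementing first and applying $f$. Consequently $\neg\bigl(\Cl_{\overline{\mathcal{F}}}(\cS)\bigr) = \Cl_\mathcal{F}(\neg \cS)$, where $\neg\cS = \{\neg v \mid v \in \cS\}$. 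The reduction therefore maps the instance $(\cS, v)$ to $(\neg\cS, \neg v)$ and answers yes iff $\neg v \in \Cl_\mathcal{F}(\neg\cS)$. This is computable in linear time, and correctness reduces to the conjugation identity, which I would verify by induction on the closure stages.

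\textbf{Reduction 3 (adding negation for self-dual $\mathcal{F}$).} This is the case I expect to be the main obstacle, since it is the only genuinely nontrivial one. Here $\mathcal{F} = \overline{\mathcal{F}}$ and I want to reduce $\Clo_{\mathcal{F}\cup\{\neg\}}$ to $\Clo_\mathcal{F}$. The natural idea is to encode complementation by doubling each vector: given a vector $v$ of size $n$, form the vector $\widehat{v} = (v, \neg v)$ of size $2n$ obtained by appending its own complement. Under this doubling, complementing $v$ corresponds to swapping the two halves of $\widehat{v}$, an action I would like to realize inside the clone generated by $\mathcal{F}$ alone. The point of the hypothesis $\mathcal{F} = \overline{\mathcal{F}}$ is exactly that this self-duality should let the operations of $\mathcal{F}$ commute with the half-swap, so that the closure of $\widehat{\cS} = \{\widehat v \mid v \in \cS\}$ under $\mathcal{F}$ already contains all the doubled versions of vectors reachable using $\neg$. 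I would then argue that $v \in \Cl_{\mathcal{F}\cup\{\neg\}}(\cS)$ iff $\widehat{v} \in \Cl_\mathcal{F}(\widehat{\cS})$, and answer the $\Clo_\mathcal{F}$ instance accordingly. The delicate part will be verifying that the self-duality of $\mathcal{F}$ is precisely what guarantees the closure on the doubled set stays within the ``symmetric'' vectors of the form $(u, \neg u)$ and faithfully tracks both $\mathcal{F}$-applications and negations; pinning down this invariant, and checking both inclusions by induction on the closure stages, is where the real work lies.
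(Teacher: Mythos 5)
Your reductions for items 1 and 2 are correct and are essentially the paper's: item 1 via $\Cl_{\mathcal{F}\cup\{f\}}(\cS)=\Cl_{\mathcal{F}}(\cS\cup\{f\})$ for $f\in\{\zero,\one\}$, and item 2 via the conjugation identity, mapping the instance $(\cS,v)$ to $(\overline{\cS},\neg v)$ (your write-up of item 2 is in fact slightly more careful than the paper's, which omits one negation in its stated identity). The problem is item 3: your doubling construction does not work. Take $\mathcal{F}=\{maj\}$, which satisfies $\mathcal{F}=\overline{\mathcal{F}}$ since the majority function is self-dual, and take $\cS=\{(0,1)\}$. Then $\neg(0,1)=(1,0)\in\Cl_{\mathcal{F}\cup\{\neg\}}(\cS)$, but $\widehat{(1,0)}=(1,0,0,1)$ is not in $\Cl_{\mathcal{F}}(\widehat{\cS})=\Cl_{\mathcal{F}}(\{(0,1,1,0)\})=\{(0,1,1,0)\}$, because $maj(x,x,x)=x$. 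So the forward direction of your claimed equivalence fails. The structural reason is that on doubled vectors, negation corresponds to the half-swap, which is a \emph{permutation of coordinates}; but every operation of the clone acts coordinate-wise and identically on each coordinate, so coordinate $i$ of any generated vector depends only on coordinate $i$ of the generators, and no composition of such operations can move the content of coordinate $n+i$ into coordinate $i$. Moreover, the invariant you hope to exploit defeats you either way: when every $f\in\mathcal{F}$ is individually self-dual (as with $maj$), one checks that $\Cl_{\mathcal{F}}(\widehat{\cS})=\{\widehat{w}\mid w\in\Cl_{\mathcal{F}}(\cS)\}$ exactly, so the closure of the doubled set is confined to symmetric vectors and captures nothing produced using $\neg$; and when $\mathcal{F}=\overline{\mathcal{F}}$ holds without each $f$ being self-dual (e.g.\ $\mathcal{F}=\{\wedge,\vee\}$), the invariant itself is false, since $(s,\neg s)\wedge(t,\neg t)=(s\wedge t,\neg(s\vee t))$ is not of the form $(u,\neg u)$.

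The correct reduction, which is the one in the paper, is much simpler and goes in the opposite direction: instead of enlarging the coordinates, enlarge the generating set. Since $\mathcal{F}=\overline{\mathcal{F}}$, for every $f\in\mathcal{F}$ there is $g\in\mathcal{F}$ with $\neg f(v_1,\dots,v_t)=\overline{f}(\neg v_1,\dots,\neg v_t)=g(\neg v_1,\dots,\neg v_t)$, so negations can be pushed to the leaves of any term; by induction on terms (using $\neg\neg v=v$), every element of $\Cl_{\mathcal{F}\cup\{\neg\}}(\cS)$ is a term over $\mathcal{F}$ alone applied to vectors of $\cS\cup\overline{\cS}$, and conversely, so $\Cl_{\mathcal{F}\cup\{\neg\}}(\cS)=\Cl_{\mathcal{F}}(\cS\cup\overline{\cS})$. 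The reduction therefore maps the instance $(\cS,v)$ of $\Clo_{\mathcal{F}\cup\{\neg\}}$ to the instance $(\cS\cup\overline{\cS},v)$ of $\Clo_{\mathcal{F}}$, keeping the dimension $n$ unchanged and only doubling the number of generators.
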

\begin{proof}
The reductions follow easily from these observations:
\begin{enumerate}
 \item $\Cl_{\mathcal{F}\cup \{f\}}(\cS) = \Cl_{\mathcal{F}}(\cS \cup \{f\})$ for $f=\zero$ or $f=\one$ and  $\cS \neq \emptyset$.
 \item $\Cl_{\overline{\mathcal{F}}}(\cS) =  \Cl_{\mathcal{F}}(\overline{\cS}) $ where $\overline{\cS}$ is the set of negation of vectors in $\cS$.
 \item $\Cl_{\mathcal{F}\cup \{\neg \}}(\cS) = \Cl_{\mathcal{F}}(\cS \cup \overline{\cS}) $  since for every $f \in \mathcal{F}$, there exists $g\in \cF $ such that $\neg(f(v_1,\dots,v_t) = \overline{f}(\neg{v_1},\dots,\neg{v_t})=g(\neg{v_1},\dots,\neg{v_t})$.
\end{enumerate}
\qed \end{proof}

In the following picture, we represent the clones which cannot be reduced to another one using Prop.~\ref{prop:postsimpl}
and that we will investigate in this article.
For a a modern presentation of all boolean clones, their bases and the Post's lattice see~\cite{reith2003optimal}.

\begin{figure}[h]
\begin{center}
\begin{tabular}{c c}

\begin{tabular}{|l|l|}\hline
Clone & Base\\\hline
$I_2$ & $\emptyset$\\\hline
$L_2$ & $x+y+z$\\\hline
$L_0$ & $x+y$\\\hline
$E_2$ & $\wedge$\\\hline
$S_{10}$ & $x \wedge( y \vee z) $ \\\hline
$S^k_{10}$ & $Th_k^{k+1}, x \wedge( y \vee z) $\\\hline
$S_{12}$ & $x \wedge (y \to z)$\\\hline
$S^k_{12}$ & $Th_k^{k+1}, x \wedge (y \to z)$\\\hline
$D_2$ & $maj$\\\hline
$D_1$ & $maj, x+y+z$\\\hline
$M_2$ & $\vee,\wedge$ \\\hline
$R_2$ & $x\,?\,y\,:\,z$\\\hline
$R_0$ & $\vee, +$\\\hline
 \end{tabular}& \hspace{1cm}
 \parbox{5cm}{
\input{./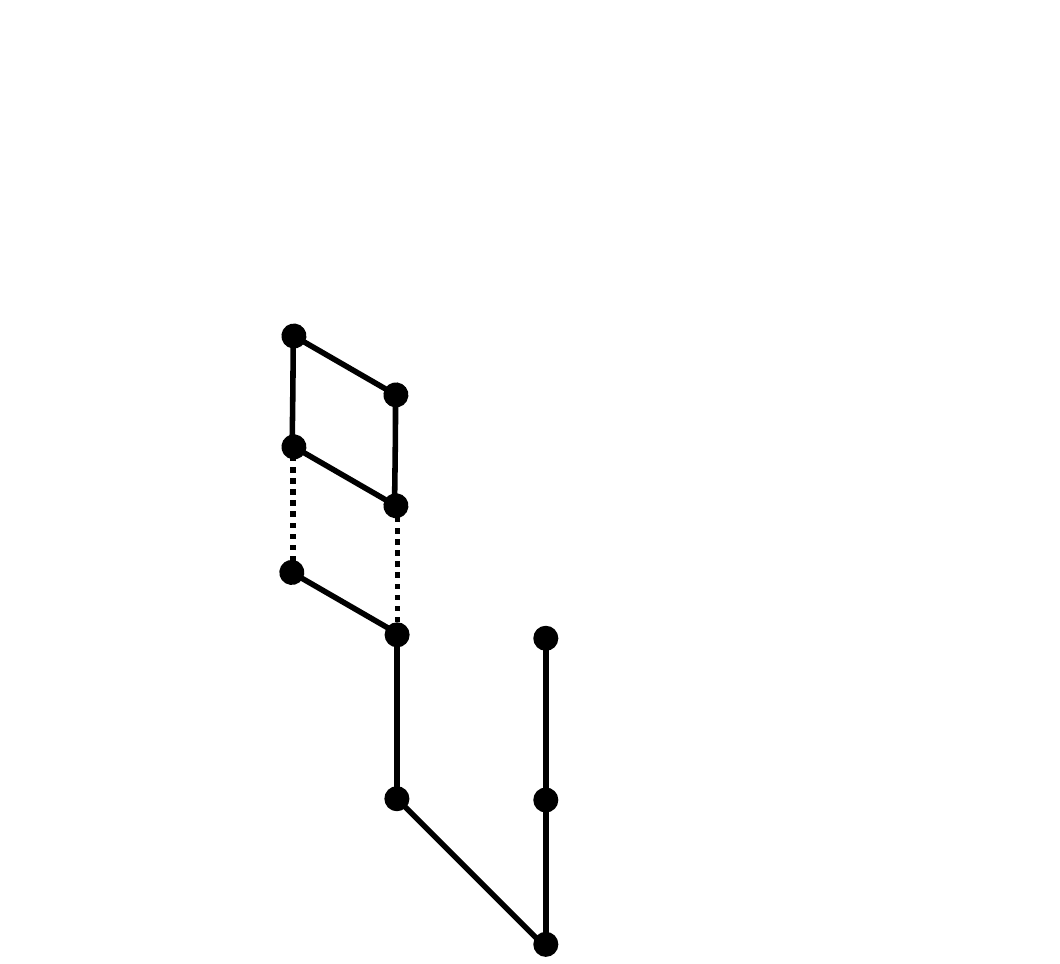_tex}
}

\end{tabular}
 \caption{The reduced Post's lattice, the edges represent inclusions of clones}
\end{center}
\end{figure}

\subsection{Conjunction}\label{sec:monotone}

We first study one of the simplest clone: $E_2 = <\wedge>$.
We give an elementary proof that $\Clo_{E_2} \in \P$, then we explain how to
obtain a good delay for $\EnumClo_{E_2}$.
For a binary vector $v$, let us denote by $\0 (v)$ (resp.  $\1 (v)$) the  set of indices $i$ for which $v_i=0$ (resp. $v_i=1$).

\begin{proposition}
$\Clo_{E_2} \in \P$.
\end{proposition}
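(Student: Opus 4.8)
The plan is to exploit that $\wedge$ is associative, commutative and idempotent, so that coordinate-wise conjunction induces a meet-semilattice order on $\{0,1\}^n$ in which $v \le w$ means $\1(v)\subseteq\1(w)$. Under this order $\Cl_{E_2}(\cS)$ is exactly the set of meets $\bigwedge_{w\in U} w$ ranging over all nonempty $U \subseteq \cS$: since conjunction is idempotent and commutative, once all finite conjunctions of elements of $\cS$ have been formed no genuinely new element can appear. Writing $\bigwedge_{w\in U} w$ for the vector whose $1$-set is $\bigcap_{w\in U}\1(w)$, the goal is therefore to decide, on input $v$, whether $v$ is such a meet, and to do so without inspecting the exponentially many subsets $U$.

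The key observation I would isolate is that a single canonical candidate suffices. Given $v$, let $T = \{\, w \in \cS \mid v \le w \,\}$ be the set of vectors of $\cS$ that dominate $v$. I claim that $v \in \Cl_{E_2}(\cS)$ if and only if $T \neq \emptyset$ and $\bigwedge_{w \in T} w = v$. The heart of the argument is the monotonicity of the meet: if $v = \bigwedge_{w\in U} w$ for some nonempty $U$, then each $w\in U$ satisfies $w \ge v$, so $U \subseteq T$; consequently $\bigwedge_{w\in T} w \le \bigwedge_{w\in U} w = v$, while every element of $T$ lying above $v$ forces $\bigwedge_{w\in T} w \ge v$, whence equality. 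Sufficiency is immediate since $T$ is then a nonempty subset of $\cS$ witnessing membership. Phrased with the zero/one sets, the test reads: $T \neq \emptyset$, and for every $j \in \0(v)$ there is some $w \in T$ with $j \in \0(w)$; the nonemptiness clause matters only in the degenerate case $v = \one$, where it simply asks whether $\one \in \cS$.

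Given the characterization, the algorithm is direct: scan $\cS$ once to build $T$ (discarding any $w$ with $\1(v)\not\subseteq\1(w)$), then compute $\bigwedge_{w\in T} w$ and compare it coordinate-wise with $v$, answering yes exactly when they agree and $T$ is nonempty. This runs in time $O(mn)$, so $\Clo_{E_2}\in\P$. I do not expect a serious obstacle here; the only real content is recognising that the largest admissible witness set $T$ is at once necessary (every witness is contained in it) and sufficient (it is itself a witness), so that this greedy choice is exact — together with keeping track of the $v=\one$ boundary case, which a naive ``empty meet equals $\one$'' reading would get wrong.
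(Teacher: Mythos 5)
Your proof is correct and follows essentially the same route as the paper's: your witness set $T$ is exactly the paper's $\cS_1 = \{w \in \cS \mid w_{\1(v)} = \one\}$, and the argument that monotonicity forces every witness into $T$ while the meet over $T$ dominates $v$ is the same, yielding the same $O(mn)$ test. The only difference is that you explicitly handle the empty-meet convention when $v=\one$, a boundary case the paper leaves implicit.
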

\begin{proof}
 Let $\cS$ be a set of boolean vectors, if we apply $\wedge$ to a couple of vectors in $\cS$ it produces the intersection of two vectors when seen as sets.
 Since the intersection operation is associative and commutative, $\Cl_{E_2}(\cS)$ is the set of arbitrary intersections of elements of  $\cS$.
 Let $v$ be a vector and let $\cS_1$ be the set  $\{w \in \cS \mid w_{\1 (v)} = \one \} $.
 Assume now that $v$ can be obtained as an intersection of elements $v_1,\dots,v_t$, those elements must be in $\cS_1$ because of the monotonicity of the intersection for the inclusion.
 On the other hand, by definition of $\cS_1$, $v$ will always be smaller or equal to  $\displaystyle{\cap_{w \in \cS_1} w}$.
  Therefore, $v \in \Cl_{E_2}(\cS)$ if and only if $v = \displaystyle{\cap_{w \in \cS_1} w}$.
 This intersection can be computed in time $O(mn)$ which concludes the proof.
\qed \end{proof}

By Prop.~\ref{prop:partialsolutions}, we can turn the algorithm for $\Clo_{E_2}$ into an enumeration algorithm
for $\EnumClo_{E_2}$ with delay $O(mn^2)$. We show in the next proposition how to reduce this delay to $O(mn)$, which is the best known complexity
for this problem.

\begin{proposition}\label{prop:unionfast}
There is an algorithm solving $\EnumClo_{E_2}$ with a delay $O(mn)$.
\end{proposition}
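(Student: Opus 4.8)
The plan is to run the backtrack search of Prop.~\ref{prop:partialsolutions} on the tree of valid prefixes, but to amortize the extension test along each branch so that the per-node cost drops from the naive $O(mn)$ to $O(m)$, which yields delay $O(mn)$. The starting point is the characterization established in the proof of $\Clo_{E_2}\in\P$: a prefix $v$ of length $l$ is a node of the tree (i.e. extends to a solution) if and only if, for every coordinate $j\le l$ with $v_j=0$, there is a vector $w\in\cS_1(v)$ with $w_j=0$, where $\cS_1(v)=\{w\in\cS\mid w_{\1(v)}=\one\}$. Equivalently, $v$ is valid iff it equals $\bigcap_{w\in\cS_1(v)}w$ restricted to $[l]$. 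Appending a $0$ at position $l+1$ leaves $\cS_1$ unchanged and adds the single new requirement ``some $w\in\cS_1(v)$ has $w_{l+1}=0$'', while appending a $1$ replaces $\cS_1(v)$ by its subset $\{w\in\cS_1(v)\mid w_{l+1}=1\}$, keeping the same $0$-coordinate requirements but against a smaller witness set.

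First I would maintain, along the current root-to-node branch of the depth-first traversal, the set $\cS_1(v)$ as a doubly linked list (so that vectors can be deleted when descending and reinserted when backtracking), together with, for every coordinate $i\in[n]$, a counter $c_i=\lvert\{w\in\cS_1(v)\mid w_i=0\}\rvert$ and the associated list $L_i$ of these witnesses with cross-pointers from each $w$ to its occurrences in the $L_i$. With this data the two child tests become local: the node $v\cdot 0$ is valid iff $c_{l+1}\ge 1$, tested in $O(1)$; and descending into $v\cdot 1$ amounts to deleting from $\cS_1$ exactly the vectors that are $0$ at coordinate $l+1$ and decrementing the counters $c_i$ (and splicing $w$ out of the lists $L_i$) for each such deletion. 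The crucial accounting observation is that along a fixed branch the set of $1$-coordinates only grows, so $\cS_1$ is monotonically non-increasing and each vector is deleted at most once; hence the total deletion-and-update work spent while descending a whole branch is $O(mn)$, i.e. $O(m)$ amortized per node, and the symmetric reinsertion on backtracking costs the same.

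The step I expect to be the main obstacle is making the $v\cdot 1$ test cheap even when it fails: appending a $1$ can destroy the last $0$-witness of some earlier $0$-coordinate, so $v\cdot 1$ may be invalid although $v\cdot 0$ is valid, and a priori detecting this seems to require a deletion pass that is then rolled back and may recur at many nodes of a single branch. To stay within the $O(mn)$-per-branch budget I would keep, for each currently fixed $0$-coordinate $i$, a pointer to one witness at the head of $L_i$, and perform the test while carrying out the (charged) deletions: $v\cdot 1$ is invalid precisely at the moment a monitored list $L_i$ becomes empty, which is detected in $O(1)$ as its counter reaches $0$, after which the deletions are undone. The witness pointers are refreshed from the list heads in $O(1)$, so no scanning for replacements is ever needed, and the work of every child test is paid for by deletions/reinsertions that occur at most once per branch.

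Finally I would convert this into a delay bound through the structure of the DFS: every internal node of the tree has at least one valid child (a valid prefix always extends to a solution) and every leaf is a solution, so between two consecutive outputs the traversal climbs to a common ancestor and descends again along a path of at most $2n$ edges, and by the amortized analysis above the total maintenance and testing cost on this path is $O(mn)$. Together with the $O(mn)$ precomputation that builds $\cS_1$, the counters $c_i$ and the lists $L_i$, this gives an algorithm for $\EnumClo_{E_2}$ with delay $O(mn)$.
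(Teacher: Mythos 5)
Your proposal is essentially the paper's own proof: the same backtrack search with the set of compatible vectors (your $\cS_1(v)$, the paper's $COMP$), the same per-coordinate counters of zero-witnesses (your $c_i$, the paper's $COUNT$), an $O(1)$ test for appending a $0$, a deletion-based test for appending a $1$, and the same amortization of deletions and reinsertions over a root-to-leaf branch, giving delay $O(mn)$ after $O(mn)$ preprocessing. The one point where you go beyond the paper --- the accounting for \emph{failed} $v\cdot 1$ tests, whose deletions are rolled back --- is indeed the delicate step, but your resolution (charging them to deletions that ``occur at most once per branch'') does not literally hold, since undone deletions can be repeated at many different ancestors between two consecutive outputs; the paper's proof silently glosses over exactly the same point, so your argument matches it in both approach and level of rigor.
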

\begin{proof}
 We use the backtrack search described in Prop. \ref{prop:partialsolutions} but we maintain data structures which allows to decide $\Clo_{E_2}$ quickly.
 Let $\cS$ be the input set of $m$ vectors of size $n$.
 During the traversal of the tree we update the partial solution $p$, represented by an array of size $n$ which stores
 whether $p_i = 1$, $p_i=0$  or is yet undefined.

 A vector $v$ of $\cS$ is compatible with the partial solution if $\1_p \subseteq \1_v$.
 We maintain an array $COMP$ indexed by the sets of $\cS$, which stores whether each vector of $\cS$ is compatible or not with the current partial solution.
 Finally we update an array $COUNT$, such that $COUNT[i]$ is  the number of compatible vectors $v \in \cS$ such that $v_i =0$.
 Remark that a partial solution $p$ can be extended into a vector of $\Cl_{E_2}(\cS)$ if and only if for all $i \in \0_p$ $COUNT[i]>0$, the solution
 is then the intersection of all compatible vectors.

 At each step of the traversal, we select an index $i$ such that $p_i$ is undefined and we set first $p_i= 0$ then  $p_i=1$.
 When we set $p_i=0$, there is no change to do in $COUNT$ and $COMP$ and we can check whether this extended partial solution is correct
 by checking if $COUNT[i]>0$ in constant time.
 When we set $p_i=1$, we need to update $COMP$ by removing from it every vector $v$ such that $v_i =0$.
 Each time we remove such a vector $v$, we decrement $COUNT[j]$ for all $j$ such that $v_j = 0$.
 If there is a $j$ such that  $COUNT[j]$ is decremented to $0$ then the extension of $p$ by $p_i = 1$ is not possible.

 When we traverse a whole branch of the tree of partial solutions during the backtrack search,
 we will set $p_i=1$ for each $i$ at most once and then we need to remove each vector from $COMP$ at most once.
 Therefore the total number of operations we do to maintain $COMP$ and $COUNT$ is $O(mn)$ and so is the delay.
\qed \end{proof}

Remark that if we ask for all extensions of the sets instead of all intersections, we exactly get the problem of enumerating the solutions
of a monotone DNF formula. In fact the algorithm used here is exactly the same as the best one to generate the solutions of a DNF formula.
Moreover, we can reduce the problem of enumerating the solutions of a monotone DNF formula to $\EnumClo_{D_2}$.
The reduction we use is called \emph{parsimonious reduction} and is relevant for counting and enumeration complexity (see~\cite{phd_strozecki}).
It maps an instance of a problem to one of another problem in polynomial time and there is a bijection between the
solutions associated to those instances which can be computed in polynomial time. For instance, since $\Cl_{\overline{D_2}}(\cS) =  \Cl_{\vee}(\overline{\cS})$
the problem of generating intersections of hyperedges reduces to the problem of generating unions of hyperedges.

\begin{proposition}
 There is a parsimonious reduction from monotone DNF to $\EnumClo_{\{\vee\}}$.
 \end{proposition}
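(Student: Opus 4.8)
The plan is to exhibit an explicit polynomial-time map from monotone DNF formulas to instances of $\EnumClo_{\{\vee\}}$ for which the two solution sets coincide verbatim, so that the identity furnishes the required bijection. Fix a monotone DNF $\phi = C_1 \vee \dots \vee C_m$ over variables $x_1,\dots,x_n$, and write each term as $C_j = \bigwedge_{i \in T_j} x_i$ with $T_j \subseteq [n]$ (we may assume every $T_j$ is nonempty, otherwise $\phi$ is trivially satisfied). Identifying an assignment with the subset of $[n]$ of the variables it sets to $1$, the satisfying assignments of $\phi$ are exactly the supersets of some term, i.e. $\mathrm{Sol}(\phi) = \{A \subseteq [n] : \exists j,\ T_j \subseteq A\}$. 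The image of the reduction will be the set system $\cS := \{T_j : j \in [m]\} \cup \{T_j \cup \{i\} : j \in [m],\ i \in [n] \setminus T_j\}$, which has $O(mn)$ vectors of size $n$ and is clearly computable in time $O(mn)$.

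The heart of the argument is the claim $\Cl_{\{\vee\}}(\cS) = \mathrm{Sol}(\phi)$, which I would prove by two inclusions. For $\subseteq$, I would observe that every generator in $\cS$ contains at least one whole term $T_j$; since a nonempty union of sets, each containing some term, again contains a term, every element of the closure is a superset of some $T_j$ and hence lies in $\mathrm{Sol}(\phi)$. For the reverse inclusion, given $A \supseteq T_j$ I would write $A = T_j \cup \bigcup_{i \in A \setminus T_j} (T_j \cup \{i\})$, exhibiting $A$ as a union of generators of $\cS$ and therefore as an element of $\Cl_{\{\vee\}}(\cS)$.

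Because the two solution sets are literally equal once assignments and subsets are identified, the reduction is parsimonious with the identity as its solution bijection, which concludes the proof. The step I expect to require the most care is the inclusion $\subseteq$: one must ensure that forming unions across generators coming from \emph{different} terms never manufactures a set outside $\mathrm{Sol}(\phi)$, and this is precisely what the observation that each generator contains an entire term guarantees. It is also the reason the \emph{immediate supersets} $T_j \cup \{i\}$ must be adjoined to the bare terms: generating the closure from $T_1,\dots,T_m$ alone would yield only unions of terms and would miss the higher supersets needed to realize the full upward-closed family $\mathrm{Sol}(\phi)$.
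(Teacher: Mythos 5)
Your proof is correct and uses essentially the same construction as the paper: the instance consists of the terms $T_j$ together with all their one-element extensions $T_j \cup \{i\}$, and the solution bijection is the identity after identifying assignments with subsets of $[n]$. The only difference is one of detail: the paper merely asserts the bijection, while you verify both inclusions of $\Cl_{\{\vee\}}(\cS) = \mathrm{Sol}(\phi)$ explicitly.
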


 \begin{proof}
 Let $\phi \equiv \bigvee_{i=1}^{m} C_i$ where the $C_i$ are clauses over the variables $x_1,\dots,x_n$.
We build an hypergraph $H$ over the domain $[n]$. For each clause $C_i$, let $e_i$ be the hyperedge
$\{ i \mid x_i \in C_i\}$ and we also add the hyperedges $e_i \cup \{x_j\}$ for all $x_j \notin E$.
There is a bijection between the union of hyperedges and the solutions of the formula $\phi$.
\qed \end{proof}

Since the reduction is parsimonious, the problems of counting the elements of $\Cl_{\vee}$ and $\Cl_{D_2}$ are $\sharp \P$-hard,
while their enumerations are easy. Determining the exact complexity of $\Clo_{\vee}$ is an intriguing \emph{open problem}:
is it possible to design an algorithm with a complexity sublinear in $m$ or even which depends on $n$ only ?
Even when the input hypergraph is a graph (every set in $\cS$ is of size $2$), $m$ is bounded by $n^2$
and the question of solving $\Clo_{\vee}$ with a delay better than $O(n^3)$ is open.

\subsection{Algebraic operations}\label{sec:algebra}

We first deal with the clone $L_0 = <+>$ where $+$ is the boolean addition.
Note that $\Cl_{L_0}(\cS)$ is the vector space generated by the vectors in $\cS$.
Seen as an operation on sets, it is the symmetric difference of the two sets.

\begin{proposition}
 $\Clo_{L_0} \in \P$.
\end{proposition}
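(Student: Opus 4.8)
The plan is to exploit the linear-algebraic characterization already noted in the text: since $+$ is associative and commutative with $x+x=\zero$, the closure $\Cl_{L_0}(\cS)$ is exactly the set of all finite sums of elements of $\cS$, which is precisely the $\mathbb{F}_2$-linear span of the vectors in $\cS$ (because every scalar is $0$ or $1$, a linear combination is just a sum of a subset of the vectors). Thus deciding $v \in \Cl_{L_0}(\cS)$ is the same as deciding whether $v$ lies in the vector space over $\mathbb{F}_2$ generated by $\cS$, and this is a classical problem solvable by Gaussian elimination.

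Concretely, I would first arrange the $m$ vectors of $\cS$ as the rows of an $m \times n$ matrix over $\mathbb{F}_2$. Then I would compute a row-echelon form (a basis of the row space) using Gaussian elimination, which takes time polynomial in $m$ and $n$ — roughly $O(m n \min(m,n))$ bit operations. Given the query vector $v$, I would attempt to reduce $v$ against the computed basis: sweep through the pivot rows, and whenever $v$ has a $1$ in a pivot column, add the corresponding basis row to $v$. After processing all pivots, $v \in \Cl_{L_0}(\cS)$ if and only if the reduced vector is $\zero$. Each of these steps runs in polynomial time, so $\Clo_{L_0} \in \P$.

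The only subtlety to address carefully is the passage from ``arbitrary sums'' to ``the $\mathbb{F}_2$-span.'' One must check that $\zero$ itself is in the closure (it is obtained as $v+v$ for any $v \in \cS$, assuming $\cS$ is nonempty; the empty case is trivial), so that the closure is genuinely a linear subspace and not merely closed under pairwise addition. Once that is settled, the membership test is exactly the standard subspace-membership test, and correctness is immediate from the uniqueness of reduction modulo a basis. I do not expect any real obstacle here: the result is essentially a restatement of elementary linear algebra over $\mathbb{F}_2$, and the main work is simply to state the echelon-form computation and the reduction of $v$ cleanly.
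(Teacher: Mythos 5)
Your proof is correct and takes essentially the same approach as the paper: identify $\Cl_{L_0}(\cS)$ with the $\mathbb{F}_2$-linear span of $\cS$ and decide membership by Gaussian elimination, which the paper phrases equivalently as testing solvability of a linear system over $\mathbb{F}_2$. Your extra check that $\zero$ lies in the closure (via $v+v$ for nonempty $\cS$) is a harmless refinement of the same argument.
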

\begin{proof}
Let $\cS$ be the set of input vectors, let $v$ be a vector and let $A$ be the matrix whose rows are the elements of $\cS$.
The vector $v$ is in $\Cl_{L_0}(\cS)$ if and only if there is a solution to $Ax =v$.
Solving a linear system over $\mathbb{F}_2$ can be done in polynomial time which proves the proposition.
\qed \end{proof}

The previous proposition yields a polynomial delay algorithm by applying Prop.~\ref{prop:partialsolutions}.
One can get a better delay, by computing in polynomial time a maximal free family $M$ of $\cS$, which is
a basis of $\Cl_{L_0}(\cS)$. The basis $M$ is a succinct representation of $\Cl_{L_0}(\cS)$. One can generate all elements of $\Cl_{L_0}(\cS)$
by going over all possible subsets of elements of $M$ and summing them. The subsets can be enumerated in constant time by using Gray code enumeration (see~\cite{knuth2011combinatorial}).
The sum can be done in time $n$ by adding a single vector since two consecutive sets differ by a single element in the Gray code order.
Therefore we have, after the polynomial time computation of $M$, an enumeration in delay $O(n)$.
If one allows to output the elements represented in the basis $M$, the algorithm even has constant delay.

With some care, we can extend this result to the clone $L_2$ generated by the sum modulo two of three elements.
\begin{proposition}
 $\Clo_{L_2} \in \P$.
\end{proposition}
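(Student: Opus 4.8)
The plan is to reduce $\Clo_{L_2}$ to the already-solved $\Clo_{L_0}$ by first pinning down the exact shape of $\Cl_{L_2}(\cS)$. Since $L_2 = \langle x+y+z\rangle$, applying the generator to three vectors returns their coordinate-wise sum over $\mathbb{F}_2$, so every element produced is an $\mathbb{F}_2$-sum of an \emph{odd} number of vectors of $\cS$. Writing $v_1,\dots,v_m$ for the elements of $\cS$, I would prove that $\Cl_{L_2}(\cS) = \{\, \sum_{i\in J} v_i \mid J\subseteq[m],\ |J|\ \text{odd}\,\}$. The inclusion ``$\subseteq$'' follows by induction on the levels $\cF^i$: the elements of $\cS$ are odd subset sums ($|J|=1$), and if $w_k=\sum_{i\in J_k}v_i$ with each $|J_k|$ odd ($k=1,2,3$), then $w_1+w_2+w_3=\sum_{i\in J_1\triangle J_2\triangle J_3}v_i$ with $|J_1\triangle J_2\triangle J_3|\equiv 1\pmod 2$. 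The inclusion ``$\supseteq$'' follows because any odd subset sum can be built by iterating $x+y+z$, using $v+v=\zero$ to collapse repeated arguments and thereby realize every odd cardinality $|J|$.

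Next I would exploit the affine structure of this set. Fix any $v_0\in\cS$. The even subset sums form a linear subspace $W$ (it contains $\zero$, and even $\triangle$ even is even), and every odd subset sum equals $v_0$ plus an even subset sum, so $\Cl_{L_2}(\cS)=v_0+W$. Moreover $W$ is exactly the span of the translated family $\{\,v+v_0 \mid v\in\cS\,\}$, i.e. $W=\Cl_{L_0}(\{\,v+v_0 \mid v\in\cS\,\})$: each $v_i+v_0$ is an even ($2$-element) subset sum, and conversely $\sum_{i\in J}v_i=\sum_{i\in J}(v_i+v_0)$ whenever $|J|$ is even. The membership test then reduces to linear algebra over $\mathbb{F}_2$: a vector $v$ lies in $\Cl_{L_2}(\cS)$ if and only if $v+v_0$ lies in the span of $\{\,v_i+v_0\,\}$, which is decidable in polynomial time by Gaussian elimination over $\mathbb{F}_2$, exactly the argument already used for $\Clo_{L_0}$. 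This gives $\Clo_{L_2}\in\P$, and one also recovers a succinct-basis enumeration as in the $L_0$ case (a basis of $W$ together with the shift $v_0$).

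The only genuinely delicate point — the phrase ``with some care'' in the statement — is the parity characterization of $\Cl_{L_2}(\cS)$, which must be argued in both directions: that the closure contains \emph{all} odd subset sums (the reachability argument, where repeated arguments and $v+v=\zero$ are what let us reach every odd $|J|$) and \emph{only} those (the parity of $|J_1|+|J_2|+|J_3|$ being preserved by the ternary sum), so that in particular $\zero$ and every even subset sum are excluded from the closure. Once this is established, the translation to a coset of $W$ and the reduction to $\Clo_{L_0}$ are routine, and the degenerate cases $\cS=\emptyset$ and $|\cS|=1$ are checked directly.
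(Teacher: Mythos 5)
Your proof is correct, and after a shared first step it takes a genuinely different route from the paper's. Both arguments rest on the same key characterization---$\Cl_{L_2}(\cS)$ is exactly the set of sums of an odd number of vectors of $\cS$---but the paper merely asserts this as an opening remark, whereas you prove both inclusions (parity of $|J_1 \triangle J_2 \triangle J_3|$ for preservation, and reachability of every odd subset sum by adjoining two vectors at a time); making that explicit is the right thing to do, since it is the only nontrivial combinatorial content. The second halves then diverge. The paper works in coefficient space: $v \in \Cl_{L_2}(\cS)$ iff the linear system $Ax = v$ has a solution of \emph{odd Hamming weight}, which it decides by computing a basis of the solution set and inspecting parities; this step is slightly delicate, because the solution set is an affine subspace rather than a vector space (the paper's wording glosses over this), so the parity check really concerns a particular solution together with a basis of the kernel. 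You instead work in the ambient space: translating by a fixed $v_0 \in \cS$ identifies the closure with the coset $v_0 + W$ where $W = \Cl_{L_0}(\{v + v_0 \mid v \in \cS\})$, so membership reduces outright to the already-solved $\Clo_{L_0}$ via a single Gaussian elimination. This affine-translation trick sidesteps the delicate point entirely and packages the result as a clean reduction from $\Clo_{L_2}$ to $\Clo_{L_0}$; as a bonus, the pair (basis of $W$, shift $v_0$) is a succinct representation of the closure, which recovers for $L_2$ the same $O(n)$-delay Gray-code enumeration the paper describes for $L_0$, something the paper's own proof of this proposition does not directly provide.
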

\begin{proof}
 First remark that any vector in $\Cl_{L_2}(\cS)$ is the sum of an odd number of vectors in $\cS$.
 In other words $v \in \Cl_{L_2}(\cS)$ if and only if there is a $x$ such that $Ax = v$ and that the Hamming weight
 of $x$ is odd. One can compute a basis $B$ of the vector space of the solutions to the equation $Ax = v$.
 If all elements of $B$ have Hamming weight even, then their sums also have Hamming weight even. Therefore
 $v \in \Cl_{L_2}(\cS)$ if and only if there is an element in $B$ with odd Hamming weight, which can be decided in polynomial time.
\qed \end{proof}

\subsection{Conjunction and disjunction}\label{sec:all}

In this subsection, we deal with the largest possible clones of our reduced Post lattice: $M_2 = <\wedge,\vee>$, $R_2 = <x \,?\, y \,:\, z>$ and $R_0= <\vee,+> $.

\begin{proposition}\label{prop:}
$\Clo_{M_2} \in \P$.
\end{proposition}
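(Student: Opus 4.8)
The plan is to exploit the fact that the clone $M_2$ contains a \emph{near-unanimity} operation, which by the Baker--Pixley theorem lets me reduce membership in the closure to a collection of independent tests on pairs of coordinates. First I would observe that the ternary majority operation belongs to $M_2$, since $maj(x,y,z) = (x\wedge y)\vee(y\wedge z)\vee(x\wedge z)$ is built from $\wedge$ and $\vee$. In particular $M_2$ contains a near-unanimity operation of arity $3$.

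Next, $\Cl_{M_2}(\cS)$ is by definition the smallest set containing $\cS$ that is invariant under all operations of $M_2$; equivalently, the operations of $M_2$ are polymorphisms of $\Cl_{M_2}(\cS)$. By the Baker--Pixley theorem~\cite{baker1975polynomial}, any relation admitting a near-unanimity polymorphism of arity $3$ is \emph{$2$-decomposable}: a vector belongs to the relation if and only if each of its binary projections belongs to the corresponding binary projection of the relation. Applying this to $\Cl_{M_2}(\cS)$ gives
\[
v \in \Cl_{M_2}(\cS) \iff \forall\, i,j \in [n],\quad v_{i,j} \in \bigl(\Cl_{M_2}(\cS)\bigr)_{\{i,j\}}.
\]

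The key simplification is that, because the operations of $M_2$ act coordinate-wise, the projection of a closure equals the closure of the projection, that is $\bigl(\Cl_{M_2}(\cS)\bigr)_{\{i,j\}} = \Cl_{M_2}(\cS_{\{i,j\}})$. This is precisely the coordinate-wise argument already used to reduce $\ExtClo_\mathcal{F}$ to $\Clo_\mathcal{F}$. Now $\cS_{\{i,j\}}$ is a set of boolean vectors of length $2$, hence a subset of the four-element set $\{(0,0),(0,1),(1,0),(1,1)\}$, so $\Cl_{M_2}(\cS_{\{i,j\}})$ can be computed in constant time by naive saturation under $\wedge$ and $\vee$.

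Finally I would assemble the algorithm: for each of the $O(n^2)$ pairs $\{i,j\}$, compute $\cS_{\{i,j\}}$ in time $O(m)$, saturate it in constant time, and test whether $v_{i,j}$ lies in the resulting binary closure; then $v\in\Cl_{M_2}(\cS)$ exactly when all these tests succeed, yielding an $O(mn^2)$ procedure and hence $\Clo_{M_2}\in\P$. The only real subtlety, and the step I expect to need the most care, is justifying the appeal to Baker--Pixley---namely verifying that $M_2$ genuinely supplies a majority polymorphism and that its closure is therefore invariant under a near-unanimity operation; once that is in place, the coordinate-wise reduction and the triviality of binary closures make the rest routine.
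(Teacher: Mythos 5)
Your proof is correct, but it takes a genuinely different route from the one the paper uses for this proposition. You invoke the Baker--Pixley theorem (Theorem~\ref{thm:BP} in the paper): since $maj(x,y,z)=(x\wedge y)\vee(y\wedge z)\vee(x\wedge z)$ lies in $M_2$, the closure is $2$-decomposable, and together with the (correct) observation that projection commutes with closure for coordinate-wise operations, membership reduces to $O(n^2)$ constant-size tests, giving $O(mn^2)$. This is essentially the argument the paper gives later, in Corollary~\ref{cor:coro} of Subsection~\ref{sec:threshold}, for \emph{all} clones containing a threshold function; the paper even remarks explicitly that Theorem~\ref{thm:BP} could have been applied to the clones of Subsection~\ref{sec:all} since they contain $maj$. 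The paper's own proof of this proposition is instead an elementary, self-contained characterization: setting $X_i=\{v\in\cS\mid v_i=1\}$, it shows $u\in\Cl_{M_2}(\cS)$ if and only if $u=\bigvee_{i\in\1(u)}\bigwedge_{v\in X_i}v$, by proving that the coordinate implication ``$v_i=1\Rightarrow v_j=1$'' is preserved by both $\wedge$ and $\vee$. What each approach buys: yours is more uniform and generalizes immediately to any clone with a near-unanimity term (and avoids reproving a preservation argument), while the paper's explicit description of the closure via the vectors $x^i=\bigwedge_{v\in X_i}v$ is what enables its subsequent hill-climbing enumeration algorithm for $\EnumClo_{M_2}$ with delay $O(n)$, beating the $O(n^2)$ delay that a Baker--Pixley-based backtrack search yields. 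One organizational caveat: within the paper this proposition precedes the statement of Theorem~\ref{thm:BP}, so your argument would require reordering, though that is not a mathematical gap.
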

\begin{proof}
Let $\cS$ be a vector set and for all $i\in [n]$, let $X_i:=\{v\in \cS \mid v_i=1\}$. We will show that a vector $u$ belongs to $\Cl_{M_2}(\cS)$ if and only if $u:=\bigvee\limits_{i\in \1(u)} \bigwedge\limits_{v\in X_i} v$. Clearly, if $u:=\bigvee\limits_{i\in \1(u)} \bigwedge\limits_{v\in X_i} v$ then $u\in \Cl_{M_2}(\cS)$.

Assume first that there exists $i\in \1(u)$ such that $X_i=\emptyset$ i.e. for all $v\in \cS$, $v_i=0$. Then clearly, for all $w\in \Cl_{M_2}(\cS)$, $w_i=0$ and then $u\notin \Cl_{M_2}(\cS)$. Assume now that $X_i\neq \emptyset$ for all $i\in \1(u)$ and assume that $u\neq t:=\bigvee\limits_{i\in \1(u)} \bigwedge\limits_{v\in X_i} v$. So there exists $j\in \0(u)$ such that $t_j=1$. Thus, there exists $i\in \1(u)$ such that for all $v\in X_i$, $v_j=1$. We have that for all $v\in \cS$, $v_i=1\Longrightarrow v_j=1$. Let us show that this property is preserved by both operations $\wedge$ and $\vee$ and then that this property holds for all $w\in \Cl_{M_2}(\cS)$. Assume that the property holds for a set $\cF$. Let $a,b \in \cF$ and let $v:=a\wedge b$. If $v_i=1$, we have $a_i=1$ and $b_i=1$ and then $a_j=1$ and $b_j=1$. We conclude that $v_j=a_j\wedge b_j=1$. Assume now that $v=a\vee b$ and that $v_i=1$. Then either $a_i=1$ or $b_i=1$, say w.l.o.g. that $a_i=1$. Then $a_j=1$ and we have $v_j=a_j\vee b_j=1$.
We have shown that the property is preserved by both operations, therefore $u$ cannot belong to $\Cl_{M_2}(\cS)$ since $u_i=1$ and $u_j=0$.
\qed \end{proof}

When we examine the previous proof, we see that the complexity of deciding $\Clo_{M_2}$ is $O(mn^2)$ therefore by applying Prop.~\ref{prop:partialsolutions}, we get an enumeration algorithm with delay $O(mn^3)$.
We can precompute the $n$ vectors $x^i = \bigwedge_{v\in X_i} v$ and generate their unions in delay $O(n^2)$ thanks to Prop.~\ref{prop:unionfast}.
By an hill climbing algorithm, using the inclusion structure of the $x^i$ we can obtain a $O(n)$ delay.

\begin{proposition}
 $\EnumClo_{M_2}$ can be solved with delay $O(n)$.
\end{proposition}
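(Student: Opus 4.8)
The goal is to improve the delay from $O(n^2)$ (which follows from Prop.~\ref{prop:unionfast} applied to the precomputed vectors $x^i = \bigwedge_{v \in X_i} v$) down to $O(n)$. By the main characterization of Prop.~\ref{prop:} we know that $\Cl_{M_2}(\cS)$ consists exactly of the unions of the $n$ vectors $x^1,\dots,x^n$ subject to the constraint that if index $i$ is ``selected'' then $x^i \leq u$ forces certain coordinates, so the problem reduces to enumerating the closure by union of the family $\{x^1,\dots,x^n\}$. The plan is therefore to exploit the special \emph{inclusion structure} among the $x^i$: by construction $x^i_i = 1$ and, as shown in the proof of Prop.~\ref{prop:}, whenever $x^i_j = 1$ we have the implication $(v_i = 1 \Rightarrow v_j = 1)$ across $\cS$, which yields a transitivity/comparability relation on the $x^i$ that a generic union algorithm does not see.

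First I would make the inclusion structure explicit. Define a relation on $[n]$ by $i \preceq j$ iff $x^j \leq x^i$ (equivalently $x^j_i = 1$); I would verify this is a preorder and compute it in the polynomial precompilation phase, together with, for each coordinate $i$, the set of generators $x^k$ that cover $i$. The key observation to establish is that a solution $u \in \Cl_{M_2}(\cS)$ is determined by its set of ``active'' generators $\{i : x^i \leq u\}$, and that the marginal cost of adding one generator $x^i$ to a partial union — i.e. the set of coordinates newly turned to $1$ — can be maintained incrementally rather than recomputed from scratch. The idea is to run a backtrack search (as in Prop.~\ref{prop:partialsolutions}) but to organize the branching over the generators in an order compatible with $\preceq$, so that each generator contributes its ``private'' new coordinates only once along a root-to-leaf branch.

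Concretely, I would use the \emph{hill climbing} idea hinted at in the text: process the generators $x^i$ in a fixed order, maintaining a counter per coordinate (analogous to the $COUNT$/$COMP$ arrays of Prop.~\ref{prop:unionfast}) recording how many currently-active generators set that coordinate to $1$. Turning a generator on or off then costs time proportional to the number of coordinates it changes, and by the inclusion structure each coordinate is flipped $O(1)$ times per branch, so that the total work along a branch of length $n$ is $O(n)$, amortized to $O(n)$ delay. The extension test — deciding whether a partial choice of active generators can be completed to a valid member of the closure — must reduce to a constant-time check using the maintained counters, exactly as the $COUNT[i] > 0$ test did in Prop.~\ref{prop:unionfast}.

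The main obstacle will be the amortization argument: I must guarantee that, despite the DAG-like inclusion structure of the $x^i$, the bookkeeping updates along any single branch (and across the backtracks within that branch) sum to $O(n)$ rather than $O(n^2)$. This requires showing that each coordinate is touched a bounded number of times per branch, which in turn leans on ordering the generators so that a coordinate is ``owned'' by a unique generator responsible for switching it. Establishing that such an ownership assignment exists and is consistent with the backtrack traversal — so that no coordinate is repeatedly toggled as the search branches and backtracks — is the delicate point; the $\wedge,\vee$ characterization from Prop.~\ref{prop:} is what makes it possible, since it pins down each solution by its active-generator set and thereby removes redundancy in the enumeration.
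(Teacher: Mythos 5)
Your skeleton matches the paper's (precompute the $x^i$, exploit their inclusion order via a topological sort, move from solution to solution with $O(n)$ incremental work), but the proposal stops exactly where the actual proof begins: the ``delicate point'' you defer --- why each coordinate can be ``owned'' by a unique generator, why the extension test becomes constant time, and why no union is ever produced twice --- is the entire content of the argument, and it rests on one structural fact that never appears in your write-up: $x^i$ is the \emph{minimum}, for inclusion, among all generators whose $i$-th coordinate is $1$; equivalently, $x^j_i = 1$ if and only if $x^j \geq x^i$. (Indeed $x^j_i=1$ means that every $v\in\cS$ with $v_j=1$ also has $v_i=1$, so the meet defining $x^j$ ranges over a subset of the vectors entering the meet defining $x^i$.) The paper fixes a total order $T$ extending inclusion and runs the hill-climbing recursion over \emph{complete} solutions: given the current solution $v$ and the list $L$ of its zero coordinates, it picks $i\in L$, outputs and recurses on $v \vee x^i$ (cost $O(n)$), and afterwards recurses on $(v, L\setminus\{i\})$. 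The two subtrees are disjoint --- hence no duplicates and delay $O(n)$ --- because the order guarantees that every $x^j$ with $j$ still in $L$ satisfies $x^j \not\geq x^i$, which by the fact above is equivalent to $x^j_i=0$, so coordinate $i$ remains $0$ throughout the second subtree. Note also that this traversal outputs a full solution at every step, supergraph-style, so no extension oracle is needed at all.

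Without that fact your amortization claim does not go through, and it is aimed at the wrong cost anyway. Along a single branch coordinates are only ever set, never unset, so ``each coordinate is flipped $O(1)$ times per branch'' is immediate and insufficient: the cost that makes Prop.~\ref{prop:unionfast} run in $O(mn)$ delay (i.e.\ $O(n^2)$ here, with $m=n$ generators) is not the flips but the compatibility bookkeeping --- detecting which generators become unusable when a coordinate is fixed to $0$, and checking that an activated generator meets no decided-zero coordinate. The minimality fact is what eliminates this cost: if the generators are processed in an order compatible with inclusion, a generator considered later never contains a coordinate whose turn has already passed, so fixing a coordinate to $0$ requires no propagation whatsoever, and the only test ever needed is whether the current union already covers the coordinate under consideration (one array lookup). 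You would need to state and prove this fact, or the equivalent disjointness argument above, to close the gap; as written, the proposal records the plan and names the obstacle, but not the idea that overcomes it.
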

\begin{proof}
Let $\cS$ be the input. We first build the $x^i = \bigwedge\limits_{v \in \cS, v_i = 1} v$.
The inclusion is a partial order between the $x_i$, we extend it into some total order $T$ by topological sorting.
 We then generate all elements of $\Cl_{M_2}(S)$ by an Hill climbing algorithm: we go from one solution
 to another by adding a single $x^i$. Let $v$ be the current solution, we maintain a list $L$ of the indices $i$
 of $v$ such that $v_i = 0$. At each step we select $i$ the first element of $L$ and we set $v_j =1$ and remove $j$ from $L$ for all
 $j \in \1(x^i)$. This produces a new solution in time $O(n)$. We then recursively call the algorithm on this new solution and list.
 When the recursive call is finished, we call the algorithm on $v$ and $L \setminus \{i\}$.

 This algorithm is correct, because the solutions generated in the two recursive calls are disjoint.
 Indeed, in the second call $v_i$ will always be $0$, because all indices in $L$ are smaller than $i$ in $T$.
 It means that $x^j$ for $j \in L$ is either smaller or incomparable. Since $x^i$ is the smallest element with $x^i_i =1$
 it implies that $x^j_i = 0$.
\qed \end{proof}

If we consider $\EnumClo_{M_2 \cup \{\neg\}}(\cS)$, it is very easy to enumerate.
Let $X^i = \{ v \mid v \in \cS, v_i = 1 \} \cup \{ \neg v \mid v \in \cS, v_i = 0 \}$ and
let $x^i =  \bigwedge_{v\in X_i} v$. The set $\Cl_{M_2\cup \{\neg\}}(\cS)$ is in fact a boolean algebra, whose atoms are the $x^i$.
Indeed, either $x^i_{i,j}= x^j_{i,j}$ and they are equal or $\1_{x^i} \cap \1_{x^j} = \emptyset$.
Let $A = \{ x^i \mid i\in [n]\}$, two distinct unions of elements in $A$ produce distinct elements.
Hence by enumerating all possible subsets of $A$ with a Gray code, we can generate $\Cl_{M_2 \cup \{\neg\}}(\cS)$ with a delay $O(n)$ (even $O(1)$
when always equal coefficients are grouped together).

The closures by the clones $R_2$ and $R_0$ are equal to the closure by $M_2 \cup \{\neg\}$ up to
some coefficients which are fixed to $0$ or $1$, thus they are as easy to enumerate.

\begin{proposition}
 The problems $\Clo_{R_2}$, $\Clo_{R_0}$ can be reduced to $\Clo_{M_2}$ in polynomial time.
\end{proposition}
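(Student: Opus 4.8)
The plan is to pin down $\Cl_{R_2}(\cS)$ and $\Cl_{R_0}(\cS)$ exactly, express each as $\Cl_{M_2\cup\{\neg\}}(\cS)$ with a prescribed set of coordinates frozen to $0$ or $1$, and then remove the negation with Prop.~\ref{prop:postsimpl}(3). The engine is a \emph{column} reformulation of membership. Write $\cS=\{u^1,\dots,u^m\}$ and, for each coordinate $i\in[n]$, set $\gamma_i:=(u^1_i,\dots,u^m_i)\in\{0,1\}^m$. Since $\Cl_{\mathcal F}(\cS)=\{f(u^1,\dots,u^m)\mid f\in\langle\mathcal F\rangle\}$ evaluated coordinatewise, one has $v\in\Cl_{\mathcal F}(\cS)$ iff there is an $m$-ary $f\in\langle\mathcal F\rangle$ with $f(\gamma_i)=v_i$ for every $i$. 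In particular $v$ must be \emph{constant on each class} of coordinates sharing the same column $\gamma_i$, and conversely any pattern that is constant on these classes and compatible with the clone's pinned values at $(0,\dots,0)$ and $(1,\dots,1)$ is realizable.

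Specializing, recall the standard Post identifications: $R_2$ is the clone of idempotent Boolean functions and $R_0$ that of $0$-preserving ones, so their $m$-ary members are exactly the $f$ with $f(0,\dots,0)=0,\ f(1,\dots,1)=1$, respectively the $f$ with $f(0,\dots,0)=0$; and $\langle M_2\cup\{\neg\}\rangle=\mathrm{BF}$ imposes no constraint. Let $F_0$ (resp.\ $F_1$) be the coordinates whose column is all-zero (resp.\ all-one), i.e.\ where every vector of $\cS$ is $0$ (resp.\ $1$). Because distinct classes have distinct columns and the only pinned points are $(0,\dots,0)$ and $(1,\dots,1)$, the column criterion yields $\Cl_{R_2}(\cS)=\{v : v\text{ constant on each class},\ v_{F_0}=\zero,\ v_{F_1}=\one\}$, and likewise $\Cl_{R_0}(\cS)=\{v : v\text{ constant on each class},\ v_{F_0}=\zero\}$ (dropping the $(1,\dots,1)$ pin), whereas $\Cl_{M_2\cup\{\neg\}}(\cS)$ is all class-constant vectors. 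The substance is the converse (realizability): given such a $v$, one defines $f(\gamma_i):=v_i$ on the finitely many columns — consistent since equal columns force equal targets and the freezing matches the pinned values — and extends $f$ arbitrarily; the result is idempotent (resp.\ $0$-preserving), hence lies in $R_2$ (resp.\ $R_0$).

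With these characterizations the reduction is routine. To decide $v\in\Cl_{R_2}(\cS)$ I would compute $F_0,F_1$ in time $O(mn)$, reject if some $i\in F_0$ has $v_i=1$ or some $i\in F_1$ has $v_i=0$, and otherwise test $v\in\Cl_{M_2\cup\{\neg\}}(\cS)$. By Prop.~\ref{prop:postsimpl}(3) applied to the clone $M_2$ — legitimate since $M_2=\overline{M_2}$, duality exchanging $\wedge$ and $\vee$ while preserving monotonicity and both constants — this last test is exactly $v\in\Cl_{M_2}(\cS\cup\overline{\cS})$, i.e.\ a single call to the $\Clo_{M_2}$ algorithm on an instance of size $O(mn)$. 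The case of $R_0$ is identical but omits the $F_1$ check. To present this as a genuine polynomial-time many-one reduction rather than a Turing one, I would fold the coordinate checks into the map, sending any instance that fails them to a fixed negative instance of $\Clo_{M_2}$.

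The main obstacle is not the bookkeeping but the realizability inclusions, and the cleanest route is precisely the column/interpolation viewpoint above, which replaces an induction over compositions of $R_2$- or $R_0$-operations by a one-shot choice of an interpolating $f$; the only external facts it uses are the Post identifications of $R_2$ and $R_0$. Minor care is needed for degenerate inputs — empty or singleton $\cS$, or empty $F_0$ or $F_1$ — but each is covered directly by the stated characterizations (for instance $\Cl_{R_0}(\{u\})=\{u,\zero\}$, matching the class-constant vectors with $v_{F_0}=\zero$).
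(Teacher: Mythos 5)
Your proposal is correct, but it reaches the key intermediate fact by a genuinely different route than the paper. Both proofs share the same skeleton: isolate the frozen coordinates $F_0,F_1$, show that apart from them the closure is the full $\langle\vee,\neg\rangle$-closure, and finish with Prop.~\ref{prop:postsimpl}(3), i.e.\ one call to $\Clo_{M_2}$ on $\cS\cup\overline{\cS}$. The difference is in the middle step. The paper stays at the level of the given generators: it observes $x\,?\,x\,:\,y=x\vee y$, explicitly constructs the vectors $\one$ and $\zero$ \emph{inside} $\Cl_{R_2}(\cS)$ (and $\one$ inside $\Cl_{R_0}(\cS)$ via unions), and then simulates negation on closure elements by $v\,?\,\zero\,:\,\one$ (resp.\ $v+\one$); this never requires knowing what the clones $R_2$ and $R_0$ are extensionally. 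You instead prove the term-operation/column characterization $\Cl_{\mathcal F}(\cS)=\{f(u^1,\dots,u^m)\mid f\in\langle\mathcal F\rangle\}$ and import the Post-lattice identifications $\langle x\,?\,y\,:\,z\rangle=\{f: f(\zero)=0,\,f(\one)=1\}$ and $\langle\vee,+\rangle=\{f: f(\zero)=0\}$, which turn membership into a one-shot interpolation problem subject to pins at the constant tuples. What your route buys: exact combinatorial descriptions of all three closures (class-constancy plus pin checks), a reduction whose correctness is immediate, and in fact a direct $O(mn)$ decision procedure that makes the detour through $\Clo_{M_2}$ logically unnecessary; it also generalizes to any clone constrained only at constant tuples. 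What it costs: the realizability direction leans entirely on the imported identifications, which are nontrivial facts (they need, e.g., a decision-tree argument for $R_2$ and a Zhegalkin-polynomial argument for $R_0$); they are standard and available from the Post-lattice reference the paper cites, but since the paper defines these clones by their bases, you should cite or prove them explicitly, whereas the paper's constructive simulation needs only the trivial containment of the generators in $\mathrm{BF}$. Also, your unproved claim $\Cl_{\mathcal F}(\cS)=\{f(u^1,\dots,u^m)\mid f\in\langle\mathcal F\rangle\}$ deserves its one-line induction on closure levels, though it is routine.
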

\begin{proof}
 Let $\cS$ be a set of binary vectors. If for some $i$,  for all $v \in \cS$, $v_i=0$ (resp. $1$)
 then for all $w \in \Cl_{R_2}(\cS)$, $w_i = 0$ (resp. $1$). Therefore, we can assume that for all $i$, there is $u$ and $v$ in $\cS$
 such that $u_i = 0$ and $v_i = 1$. Remark that $ x\, ?\, x \,:\, y = x \vee y$ thus by the previous assumption we can generate $\one$.
 Let $w^l = w^{l-1} \,?\, u^l \,:\, w^{l-1}$. By assumption, we can chose $u^l$ such that $u^l_l=0$. We set $w_0=u_0$ and by a trivial induction $w_n = \zero$.
 Now remark that $ x \,?\, \zero \,:\, \one = \overline x$. Therefore we have $\Cl_{R_2}(\cS) = \Cl_{<\vee,\neg>}$ and the problem $\Clo_{<\vee,\neg>}$ can be polynomially reduced to $\Clo_{M_2}$ by point $3$ of Prop.~\ref{prop:postsimpl}.

 If for some $i$,  for all $v \in \cS$, $v_i=0$ then for all $w \in \Cl_{R_0}(\cS)$, $w_i = 0$. Therefore we can assume that for all $i$,
  there is $u\in \cS$ such that $u_i =1$. Therefore, $\one \in \Cl_{R_0}(\cS)$ by doing the union of the elements  $u^l$ such that $u^l_l = 1$.
  Finally, $x + \one = \overline{x}$ therefore we also have $\Cl_{R_0}(\cS) = \Cl_{<\vee,\neg>}$.
\qed \end{proof}

\subsection{Majority and threshold}\label{sec:threshold}

An operation $f$ is a \emph{near unanimity} of arity $k$ if it satisfies $f(x_1,x_2,\dots,x_k) = x$ for each $k$-tuple
with at most one element different from $x$. The \emph{threshold} function of arity $k$, denoted by $Th^{k}_{k-1}$, is defined by $Th^{k}_{k-1}(x_1,\dots,x_k)$ is equal to $1$ if and only if at least $k-1$ of the elements $x_1,\dots,x_k$ are equal to one.
It is the smallest near unanimity operation over the booleans.
The threshold function $Th^{3}_2$ is the majority operation over three booleans that we denote by $maj$ and the clone it generates is $D_2$.
We first give a characterization of $\Cl_{D_2}(\cS)$ which helps prove that $\Clo_{D_2} \in P$.
The characterization is a particular case of a universal algebra theorem that we then use to compute the closure by any clone which contains a threshold function.

\begin{lemma}\label{lemma:maj}
Let $\cS$ be a vector set, a vector $v$ belongs to $\Cl_{D_2}(\cS)$ if and only if for all $i,j\in [n]$, $i\neq j$, there exists $x\in \cS$ such that $x_{i,j}=v_{i,j}$.
\end{lemma}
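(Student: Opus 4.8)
The plan is to recognize this as the boolean, arity-three instance of the Baker--Pixley theorem: since $maj$ is a near-unanimity operation of arity $3$, the closure $\Cl_{D_2}(\cS)$ is completely determined by its projections onto pairs of coordinates. I would split the equivalence into its two directions, establish one reusable lemma first, and treat the reverse implication as the real content.

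The preliminary observation I would record once and reuse is that projection commutes with closure, namely $\Cl_{D_2}(\cS)_{\{i,j\}} = \Cl_{D_2}(\cS_{\{i,j\}})$ for every pair $i \neq j$. This is immediate because $maj$ acts coordinate-wise: any $maj$-computation carried out in $\cS$ restricts to the same computation in $\cS_{\{i,j\}}$, and conversely any computation in $\cS_{\{i,j\}}$ lifts by choosing arbitrary preimages in $\cS$.

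For the forward direction ($\Rightarrow$) I would use that every binary boolean relation is closed under $maj$. Indeed $\cS_{\{i,j\}} \subseteq \{0,1\}^2$ is the solution set of some $2$-CNF formula (add one clause forbidding each excluded pair), and $maj$ preserves every clause on at most two literals: if the majority of three satisfying assignments falsified a clause $\ell_1 \vee \ell_2$, then at least two of the assignments would falsify $\ell_1$ and at least two would falsify $\ell_2$, so by pigeonhole a single assignment would falsify both, contradicting that it satisfies the clause. Hence $\Cl_{D_2}(\cS_{\{i,j\}}) = \cS_{\{i,j\}}$, and the preliminary observation turns $v \in \Cl_{D_2}(\cS)$ into $v_{i,j} \in \Cl_{D_2}(\cS)_{\{i,j\}} = \cS_{\{i,j\}}$, which is exactly the existence of the required $x \in \cS$.

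The reverse direction ($\Leftarrow$) is the main obstacle and is where the near-unanimity property of $maj$ is genuinely used. Assuming $n \geq 2$ (the case $n \leq 1$ being degenerate), I would prove by induction on $\ell$ the stronger claim: for every $I \subseteq [n]$ with $|I| = \ell \geq 2$ there is $w \in \Cl_{D_2}(\cS)$ with $w_I = v_I$. The base case $\ell = 2$ is precisely the hypothesis. For the inductive step with $|I| = \ell + 1 \geq 3$, I would pick three distinct coordinates $i_1, i_2, i_3 \in I$, apply the induction hypothesis to the three size-$\ell$ sets $I \setminus \{i_1\}$, $I \setminus \{i_2\}$, $I \setminus \{i_3\}$ to obtain $w^1, w^2, w^3 \in \Cl_{D_2}(\cS)$ agreeing with $v$ on them, and set $w = maj(w^1, w^2, w^3)$, which lies in $\Cl_{D_2}(\cS)$ by closure. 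The crucial combinatorial point is that each coordinate of $I$ is removed from at most one of the three subsets, so at least two of $w^1, w^2, w^3$ agree with $v$ there and $maj$ returns $v$'s value. Taking $I = [n]$ yields $w = v \in \Cl_{D_2}(\cS)$, which finishes the proof.
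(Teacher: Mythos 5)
Your proof is correct, but it takes a genuinely different route from the paper's, most visibly in the ($\Leftarrow$) direction. For ($\Rightarrow$) the two arguments share the same pigeonhole core: the paper shows directly that the property ``no vector of the set takes the value $(a,b)$ at positions $(i,j)$'' is preserved by one application of $maj$, while you repackage this as ``$maj$ preserves every clause on at most two literals, hence every binary boolean relation is $maj$-closed'', combined with the commutation $\Cl_{D_2}(\cS)_I=\Cl_{D_2}(\cS_I)$ (a fact the paper also states and uses, in the proof of Corollary~\ref{cor:coro}). For ($\Leftarrow$) the inductive structures differ. The paper inducts on the prefix $[k]$: from a single element $w\in\Cl_{D_2}(\cS)$ agreeing with $v$ on $[k-1]$ and pair witnesses $v^i\in\cS$ with $v^i_{i,k}=(a_i,a_k)$, it builds the chain $u^1=v^1$, $u^i=maj(w,u^{i-1},v^i)$, each application fixing one more coordinate while preserving coordinate $k$. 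You instead induct on the cardinality of an arbitrary $I\subseteq[n]$: apply the hypothesis to $I\setminus\{i_1\}$, $I\setminus\{i_2\}$, $I\setminus\{i_3\}$ and take a single majority of the three resulting closure elements; since each coordinate of $I$ is removed from at most one of these sets, $maj$ restores $v$ everywhere on $I$. Yours is exactly the classical Baker--Pixley induction: it is shorter, symmetric, and generalizes verbatim to a near-unanimity term of arity $k$ (use the $k$ sets $I\setminus\{i_r\}$, $r=1,\dots,k$; each coordinate is missing from at most one argument), so it essentially proves Theorem~\ref{thm:BP} itself --- whereas the paper deliberately chose its more hands-on proof as an illustration of how Theorem~\ref{thm:BP} is proved. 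What the paper's chain construction buys in exchange is an explicit small derivation: reusing $w$ across the chain, it exhibits $v$ as a DAG of $O(n^2)$ applications of $maj$ over elements of $\cS$, while your recursion unfolds into a ternary tree of exponential size --- harmless for mere membership in the closure, which allows arbitrary finite iteration, but less informative as a witness. A minor point in your favour: you flag the degenerate case $n\le 1$ (where the pairwise condition is vacuous but the conclusion can fail), which the statement and the paper's proof silently ignore.
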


\begin{proof}

    \noindent ($\Longrightarrow$)
     Given $a,b\in \{0,1\}$ and $i,j\in [n]$, $i\neq j$, we first show that if for all $v\in\cS$, $v_i\neq a$ or $v_j\neq b$ then for all $u\in \Cl_{D_2}(\cS)$, $v_i\neq a$ or $v_j\neq b$. It is sufficient to prove that this property is preserved by applying $maj$ to a vector set i.e. that if $\cS$ has this property, then $maj(\cS)$ has also this property. Let $x,y,z\in \cS$, $v:=maj(x,y,z)$, and assume for contradiction that $v_{i,j}=(a,b)$. Since $v_i=a$, there is at least two vectors among $\{x,y,z\}$ that are equal to $a$ at index $i$. Without loss of generality, let $x$ and $y$ be these two vectors. Since for all $u\in\cS$, $u_i\neq a$ or $u_j\neq b$, we have $x_j\neq b$ and $y_j\neq b$ and then $v_j\neq b$ which contradicts the assumption. We conclude that if $v\in \Cl_{D_2}(\cS)$, then for all $i,j\in [n]$, there exists $u\in \cS$ with $v_{i,j}=u_{i,j}$.

\noindent ($\Longleftarrow$)
  Let $k\leq n$ and let $a_1,...,a_k\in \{0,1\}$.
We will show by induction on $k$, that if for all $i,j\leq k$ there exists $v\in\cS$ with $v_i=a_i$ and $v_j=a_j$, then there exists $u\in \Cl_{D_2}(\cS)$ with $u_1=a_1$, $u_2=a_2$, $...$, $u_k=a_k$. The assertion is true for $k=2$. Assume
it is true for $k-1$, and let $a_1,...,a_k\in \{0,1\}$. By induction hypothesis there
exists a vector $w\in \Cl_{D_2}(\cS)$ with $w_1=a_1$, $...$, $w_{k-1}=a_{k-1}$.
By hypothesis, for all $i\leq k$ there exists $v^{i}\in \cS$ with
$v^{i}_{i}=a_i$ and $v^{i}_{k}=a_k$. We then construct a sequence of vectors
$(u^i)_{i\leq k}$ as follow. We let $u^{1}=v^{1}$ and for all $1<i<k$,
$u^{i}=maj(w,u^{i-1},v^{i})$. We claim that $u:=u^{k-1}$ has the property sought
i.e. for all $i\leq k$, $u_{i}=a_{i}$. First let prove that for all $i<k$ and for
all $j\leq i$, $u^{i}_{j}=a_j$. It is true for $u_{1}$ by definition.
Assume now that the property holds for $u^{i-1}$, $i<k$. Then, by construction,
for all $j\leq i-1$, we have $u^{i}_j=a_j$ since $w_j=a_j$
and $u^{i-1}_{j}=a_j$. Furthermore, we have
$u^{i}_{i}=maj(w_i,u^{i-1}_{i},v^{i}_i)=a_i$ since $w_i=a_i$ and $v_i=a_i$. We
conclude that for all $i\leq k-1$, $u_i=u^{k-1}_{i}=a_i$.

    We claim now that for all $i<k$,  $u^{i}_k=a_k$. It is true for $u^{1}$. Assume it is true for $u^{i-1}$, $i<k$. Then we have $u^{i}_k=maj(w_k,u^{i-1}_{k},v^{i}_k)$ which is equal to $a_k$ since $u^{i-1}_{k}=a_k$ by induction and $v^{i}_k=a_k$ by definition. We then have $u_i=a_i$ for all $i\leq k$ which concludes the proof.
\qed \end{proof}

\begin{corollary}
$\Clo_{D_2}$ is polynomial.
\end{corollary}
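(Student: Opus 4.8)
The plan is to derive this corollary directly from the characterization of $\Cl_{D_2}(\cS)$ established in Lemma~\ref{lemma:maj}. That lemma says that a vector $v$ lies in $\Cl_{D_2}(\cS)$ precisely when, for every pair of distinct indices $i,j \in [n]$, there is some witness $x \in \cS$ with $x_{i,j} = v_{i,j}$. Since the right-hand side of this equivalence is a purely local condition quantified over pairs of coordinates, it can be tested directly without ever computing the (potentially exponential-size) closure, so the decision procedure will simply be a verification of this condition.

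Concretely, I would iterate over all $\binom{n}{2}$ unordered pairs $\{i,j\}$ of distinct indices. For each such pair I would scan the $m$ vectors of $\cS$ and check whether at least one vector $x$ satisfies $x_i = v_i$ and $x_j = v_j$ simultaneously; this is a constant-time test per vector, so $O(m)$ per pair. If some pair admits no witness, I reject (by Lemma~\ref{lemma:maj}, $v \notin \Cl_{D_2}(\cS)$); if every pair has a witness, I accept. The correctness of this algorithm is exactly the statement of Lemma~\ref{lemma:maj}, so nothing beyond the lemma is needed.

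For the complexity, there are $O(n^2)$ pairs and each costs $O(m)$ time, giving a total running time of $O(mn^2)$, which is polynomial in the input size. Hence $\Clo_{D_2} \in \P$.

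I do not anticipate a genuine obstacle here, since the entire difficulty has already been absorbed into proving Lemma~\ref{lemma:maj}; the only thing worth stating carefully is the complexity bound, which then feeds into the generic machinery. In particular, combining $\Clo_{D_2} \in \P$ with the reduction of Prop.~\ref{prop:partialsolutions} immediately yields that $\EnumClo_{D_2} \in \DelayP$, matching the pattern used for the other clones in this section.
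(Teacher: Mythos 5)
Your proof is correct and is essentially identical to the paper's: both test the pairwise condition of Lemma~\ref{lemma:maj} directly, scanning all $O(n^2)$ index pairs against the $m$ vectors of $\cS$, for a total cost of $O(mn^2)$. Nothing differs beyond the level of detail in stating the algorithm.
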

\begin{proof}
Using Lemma~\ref{lemma:maj}, one decides whether a vector $v$ is in $\Cl_{D_2}(S)$, by considering every pair of index
$i,j$ and checking whether there is a vector $w \in S$ such that $v_i,j = w_i,j$. The complexity is in $O(mn^2)$.
\qed \end{proof}

By applying Prop~\ref{prop:partialsolutions}, we get an enumeration algorithm in delay $O(mn^3)$,
and we explain how to improve this delay in the next proposition.

\begin{proposition}\label{prop:speedup}
 $\EnumClo_{D_2}$ can be solved in delay $O(n^2)$.
\end{proposition}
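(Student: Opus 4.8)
The plan is to run the backtrack search of Proposition~\ref{prop:partialsolutions}, but to replace the $O(mn^2)$ extension oracle given by the Corollary with an incremental test costing only $O(n)$ per node and, crucially, independent of $m$. Recall from the proof of the proposition relating $\Clo$ and $\ExtClo$ that testing whether a partial solution $v$ of length $l$ extends to a full solution amounts to testing $v \in \Cl_{D_2}(\cS_{[l]})$, and by Lemma~\ref{lemma:maj} this holds if and only if for every pair $i,j \in [l]$ with $i \neq j$ there exists $x \in \cS$ such that $x_i = v_i$ and $x_j = v_j$. The whole speedup rests on exploiting that this characterization is \emph{pairwise} and therefore both independent of $m$ after preprocessing and \emph{incremental} along a branch.

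First I would precompute a table $T$ indexed by $(i,j) \in [n]^2$ and $(a,b) \in \{0,1\}^2$, where $T[i,j,a,b]$ records whether some $x \in \cS$ satisfies $x_i = a$ and $x_j = b$, together with an auxiliary single-coordinate table handling the root. Scanning each of the $m$ vectors once and updating all $O(n^2)$ pair-entries fills $T$ in time $O(mn^2)$ and space $O(n^2)$. This cost is absorbed into the polynomial precomputation allowed before enumeration and does not count towards the delay; afterwards every pairwise realizability query is answered by a single lookup in $O(1)$, and the set $\cS$ is never touched again during the traversal.

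The key observation is that the test is incremental: when the search extends a valid partial solution of length $l-1$ to length $l$ by fixing the value $v_l$, all pairs contained in $[l-1]$ were already certified at the parent node, so only the $l-1$ new pairs $(i,l)$ with $i<l$ must be checked. Each check is one lookup in $T$, so the extension oracle at a node of depth $l$ costs $O(l) = O(n)$, and it carries no state to undo on backtracking, since a lookup is stateless. Because we descend into a child only when it passes the test, exactness of the oracle guarantees that every visited node lies on a root-to-solution path, so no dead branch is ever explored.

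Finally I would bound the delay. Between two consecutive solutions the traversal backtracks up the current branch and descends a new one, visiting $O(n)$ nodes in total; at each node it evaluates the incremental test for at most its two children at cost $O(n)$ each, giving delay $O(n^2)$. The one point needing care is the formal accounting of the backtrack traversal, namely verifying that failing (pruned) children contribute only $O(n)$ extra oracle evaluations between two outputs; this follows from the standard argument since each node on the transition path is charged $O(1)$ oracle calls. The base case $l=1$, where no pair exists, is dispatched by the auxiliary single-coordinate table.
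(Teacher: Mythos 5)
Your proof is correct and takes essentially the same approach as the paper's: precompute, for every pair of coordinates $(i,j)$, the set of values $(a,b)$ realized by some vector of $\cS$, then exploit the pairwise characterization of Lemma~\ref{lemma:maj} incrementally along each branch of the backtrack search, so that at depth $l$ only the $l-1$ new pairs $(i,l)$ are checked, giving delay $O(n^2)$. Your explicit treatment of the depth-one base case via a single-coordinate table is a minor point of extra care that the paper's proof leaves implicit.
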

\begin{proof}
We do a backtrack search and we explain how to efficiently decide $\Clo_{D_2}$ during the enumeration.
 We first precompute for each pair $(i,j)$ all values $(a,b)$ such that there exists $v\in \cS$,
 $v_{i,j} = (a,b)$. When we want to decide whether the vector $v$ of size $l$ can be extended into a solution,
 it is enough that it satisfies the condition of Lemma~\ref{lemma:maj}. Moreover, we already know that
 $v_{[l-1]}$ satisfies the condition of Lemma~\ref{lemma:maj}. Hence we only have to check that the values of $v_{i,l}$ for all $i <l$
 can be found in $\cS_{i,l}$ which can be done in time $O(l)$. The delay is the sum of the complexity of deciding $\Clo_{D_2}$ for each partial solution in a branch: $O(n^2)$.
\qed \end{proof}

It turns out that Lemma~\ref{lemma:maj} is a particular case of a general theorem of universal algebra
which applies to all near unanimity terms. However we felt it was interesting to give the lemma and its proof to
get a sense of how the following theorem is proved.

\begin{theorem}[Baker-Pixley, adapted from~\cite{baker1975polynomial}]\label{thm:BP}
Let $\mathcal{F}$ be a clone which contains a near unanimity term of arity $k$,
then $v \in \Cl_{\mathcal{F}}(\cS)$ if and only if for all set of indices $I$ of size $k-1$,
$v_I \in \Cl_{\mathcal{F}}(\cS)_I$.
\end{theorem}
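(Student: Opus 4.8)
\emph{Forward direction.} The implication $v \in \Cl_{\mathcal{F}}(\cS) \Rightarrow v_I \in \Cl_{\mathcal{F}}(\cS)_I$ is immediate from the definition $\Cl_{\mathcal{F}}(\cS)_I = \{u_I \mid u \in \Cl_{\mathcal{F}}(\cS)\}$, so the content of the theorem is the converse, which I would prove by induction on $n$, generalizing the argument of Lemma~\ref{lemma:maj} (the case of the majority operation, a near unanimity term of arity $3$, where $k-1=2$ and the condition is on pairs). Write $R = \Cl_{\mathcal{F}}(\cS)$ and let $f \in \mathcal{F}$ be the near unanimity term of arity $k$, so that $f(x_1,\dots,x_k)=x$ as soon as at most one argument differs from $x$; since $f \in \mathcal{F}$, the set $R$ is closed under $f$ applied coordinate-wise. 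Because the operations of $\mathcal{F}$ act coordinate-wise, projection commutes with closure, that is $\Cl_{\mathcal{F}}(\cS)_I = \Cl_{\mathcal{F}}(\cS_I)$ for every $I$ (the same observation used to reduce $\ExtClo_{\mathcal{F}}$ to $\Clo_{\mathcal{F}}$). Hence the hypothesis reads: for every $(k-1)$-subset $I \subseteq [n]$ there is a vector of $R$ agreeing with $v$ on $I$. I would state the claim so that it is quantified over all ground sets, which lets the induction hypothesis be applied to projected instances.

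\emph{Induction.} For $n \leq k-1$ the set $[n]$ is itself a $(k-1)$-subset and the conclusion is exactly the hypothesis. For $n \geq k$, I would first produce, for each coordinate $i \in [k]$, a vector $c^i \in R$ agreeing with $v$ on every coordinate except possibly $i$: the projection $v_{[n]\setminus\{i\}}$ still satisfies all the $(k-1)$-subset conditions, since its $(k-1)$-subsets are $(k-1)$-subsets of $[n]$, so the induction hypothesis applied to $\cS_{[n]\setminus\{i\}}$ (using $\Cl_{\mathcal{F}}(\cS)_{[n]\setminus\{i\}} = \Cl_{\mathcal{F}}(\cS_{[n]\setminus\{i\}})$) yields such a $c^i$. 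Then I would set $u = f(c^1,\dots,c^k)$, which lies in $R$ by closure. At any coordinate $j$, at most one of $c^1,\dots,c^k$ disagrees with $v_j$ — namely $c^j$ when $j \leq k$, and none when $j > k$ since each $c^i$ fails only at its own index $i \leq k < j$; hence at least $k-1$ of the arguments equal $v_j$ and the near unanimity property forces $u_j = v_j$. Therefore $u = v$ and $v \in R$, completing the induction.

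The argument is mostly bookkeeping; the one point I expect to be delicate is the combination step. Its crux is that one does \emph{not} combine all $n$ deficient vectors but only the $k$ vectors $c^1,\dots,c^k$, and that the counting ``at most one disagreement per coordinate'' is exactly what a single application of the $k$-ary term $f$ can absorb. In particular, this is why no higher-arity near unanimity term has to be constructed from $f$: using $k$ deficient vectors already suffices. The remaining care is in checking that each $c^i$ is legitimately obtained from the induction hypothesis, which is where the commutation of projection and closure is used.
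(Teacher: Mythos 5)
Your proof is correct, and it is worth noting that the paper itself does not prove Theorem~\ref{thm:BP} at all: it imports the result from Baker and Pixley's paper and offers the proof of Lemma~\ref{lemma:maj} (the majority case, $k=3$) only as an illustration of ``how the following theorem is proved''. So your argument is a genuine, self-contained alternative, and it is the standard one for the relational direction of Baker--Pixley: induct on the number $n$ of coordinates, use the commutation of projection with closure, $\Cl_{\mathcal{F}}(\cS)_I=\Cl_{\mathcal{F}}(\cS_I)$ (which you correctly isolate as the point where the coordinate-wise action of $\mathcal{F}$ is used), to obtain for each $i\in[k]$ a vector $c^i\in\Cl_{\mathcal{F}}(\cS)$ agreeing with $v$ outside coordinate $i$, and then absorb the at-most-one-disagreement-per-coordinate with a \emph{single} application of the $k$-ary near unanimity term. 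This is structurally different from the paper's illustrative proof of Lemma~\ref{lemma:maj}, which fixes coordinates one at a time by a chaining sequence $u^{i}=maj(w,u^{i-1},v^{i})$, i.e.\ many nested applications of the term along a prefix induction; your route trades that chaining for an induction on $n$, which makes the combination step a one-line counting argument and generalizes to arbitrary arity $k$ with no additional bookkeeping, whereas the paper's chaining is more explicitly constructive but tailored to $k=3$.

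One caveat: your base case assertion that ``for $n\le k-1$ the set $[n]$ is itself a $(k-1)$-subset'' holds only when $n=k-1$. For $n<k-1$ the hypothesis of the theorem is vacuous and the statement as written is actually false (take $k=3$, $n=1$, $\cS=\{(0)\}$, $v=(1)$), so your induction should either assume $n\ge k-1$ or read the index sets as tuples with repetition allowed. This degenerate case is inherited from the paper's own formulation --- Lemma~\ref{lemma:maj} has the same issue at $n=1$ --- so it is a blemish of the statement rather than of your argument, but a careful write-up should say which convention it adopts.
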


This allows to settle the case of $D_1 = <maj, x+y+z>$ and of the two infinite families of clones
of our restricted lattice $S_{10}^k = <Th_k^{k+1}, x \wedge (y \vee z)> $ and $S_{12}^k = <Th_k^{k+1},x \wedge (y \rightarrow z) > $.

\begin{corollary}\label{cor:coro}
If a clone $\mathcal{F}$ contains $Th_k^{k+1}$ then $\Clo_{\mathcal{F}}$ is solvable in $O(mn^{k})$.
In particular $\Clo(S_{10}^k)$, $\Clo(S_{12}^k)$ and $\Clo(D_1)$ are in $\P$.
\end{corollary}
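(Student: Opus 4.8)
The plan is to apply the Baker-Pixley theorem (Theorem~\ref{thm:BP}) directly, since the threshold function $Th_k^{k+1}$ is by definition a near unanimity term: it returns $x$ whenever at most one of its $k+1$ arguments differs from $x$, so it is a near unanimity operation of arity $k+1$. Invoking Theorem~\ref{thm:BP} with this near unanimity term of arity $k+1$, we get that $v \in \Cl_{\mathcal{F}}(\cS)$ if and only if for every set of indices $I$ of size $k$, the projection $v_I$ belongs to $\Cl_{\mathcal{F}}(\cS)_I$. This reduces a global membership question to a collection of local checks over $k$-element coordinate subsets.

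Next I would bound the cost of performing these local checks. The number of index sets $I$ of size $k$ is $\binom{n}{k} = O(n^k)$ (treating $k$ as a fixed constant, as throughout the paper). For each such $I$, we must decide whether $v_I \in \Cl_{\mathcal{F}}(\cS)_I$. The key observation is that $\Cl_{\mathcal{F}}(\cS)_I$ depends only on the restriction $\cS_I$ of the input to the $k$ coordinates in $I$, so this is a membership problem over a domain of only $2^k = O(1)$ possible tuples. Concretely, for each $I$ we can precompute the set $\Cl_{\mathcal{F}}(\cS_I)$ by a bounded saturation: since there are at most $2^k$ distinct tuples on $k$ coordinates, the closure stabilizes after a constant number of rounds, each round costing $O(1)$ work (again because $k$ and $\mathcal{F}$ are fixed). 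The initial computation of $\cS_I$ costs $O(m)$ by scanning the $m$ input vectors, so each of the $O(n^k)$ local checks costs $O(m)$, giving the claimed $O(mn^k)$ bound.

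The second assertion then follows immediately by specialization. The clones $S_{10}^k$ and $S_{12}^k$ contain $Th_k^{k+1}$ by their very definitions as given in the base table of the reduced Post's lattice, so both fall under the first part with the stated complexity and are therefore in $\P$. For $D_1 = \langle maj, x+y+z\rangle$, note that $maj = Th_2^3$ is a near unanimity term of arity $3$, so $D_1$ contains $Th_2^3$ and the result applies with $k = 2$, placing $\Clo_{D_1}$ in $\P$ as well.

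The only genuine subtlety, and the step I would treat most carefully, is verifying that $\Cl_{\mathcal{F}}(\cS)_I = \Cl_{\mathcal{F}}(\cS_I)$, so that the local membership test can actually be computed from the restricted input rather than requiring knowledge of the full closure. This identity holds precisely because the operations of $\mathcal{F}$ act coordinate-wise: restricting to the coordinates in $I$ commutes with applying any $f \in \mathcal{F}$, exactly as exploited in the reduction from $\ExtClo_{\mathcal{F}}$ to $\Clo_{\mathcal{F}}$ earlier in the paper. Given this, each projected closure is computable in constant time (in $n$ and $m$), and the whole algorithm is a routine enumeration over the $O(n^k)$ index sets.
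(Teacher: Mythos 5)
Your proposal is correct and follows essentially the same route as the paper's proof: invoke Baker--Pixley with the near unanimity term $Th_k^{k+1}$ of arity $k+1$, use the coordinate-wise identity $\Cl_{\mathcal{F}}(\cS)_I = \Cl_{\mathcal{F}}(\cS_I)$, and observe that each of the $O(n^k)$ projected closures is built in time $O(m)$ plus constant-time saturation. You merely spell out two points the paper leaves implicit (the verification that $Th_k^{k+1}$ is near unanimity, and that $D_1$ qualifies via $maj = Th_2^3$), which is fine.
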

\begin{proof}
 Let $\cS$ bet a set of vectors and let $v$ be a vector. By Th.~\ref{thm:BP}, $v\ \in \Cl_{\mathcal{F}}(\cS)$
 if and only if for all $I$, $v_I \in \Cl_{\mathcal{F}}(\cS)_I$. First remark that $\Cl_{\mathcal{F}}(\cS)_I = \Cl_{\mathcal{F}}(\cS_I)$ because
 the functions of $\mathcal{F}$ act coefficient-wise on $\cS$. The algorithm generates for each $I$ of size $k$ the set  $\Cl_{\mathcal{F}}(\cS_I)$.
 For a given $I$, we first need to build the set $\cS_I$ in time $m$ and then the generation of $\Cl_{\mathcal{F}}(\cS_I)$ can be done in constant time.
 Indeed, we can apply the classical incremental algorithm to generate the elements in $\Cl_{\mathcal{F}}(\cS_I)$, and the cardinal of $\Cl_{\mathcal{F}}(\cS_I)$
 only depends on $k$ which is a constant. The time to generate all $\Cl_{\mathcal{F}}(\cS_I)$ is $O(mn^k)$ and then all the tests can be done in $O(n^k)$.
 \qed \end{proof}

We have proved that the complexity of any closure problem in one of our infinite families is polynomial.
Remark that we can use the method of Prop.~\ref{prop:speedup} to obtain a delay $O(n^{k})$
for enumerating the elements of a set closed by a near unanimity function of arity $k$.
Notice that we could have applied Theorem~\ref{thm:BP} to the clones of Subsection~\ref{sec:all} which all contain the $maj$ function.
However, it was relevant to deal with them separately to obtain a different algorithm with delay $O(n)$ rather than $O(n^2)$.

Notice that the complexity of $\Clo_\mathcal{F}$ is increasing with the smallest arity of a near unanimity function in $\mathcal{F}$. We should thus investigate the complexity of the uniform problem when the clone is given as input. We introduce the following restricted version which turns out to be hard.

\medskip
\Pb{ClosureTreshold}{A set $\cS$ of vectors and an integer $k$}{Yes, if the vector $\mathbf{1}\in \Cl_{S_{10}^k}(\cS) $}

\begin{theorem}\label{thm:NP-completeness}
 \textbf{ClosureTreshold} is $\coNP$-complete.
\end{theorem}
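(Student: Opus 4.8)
The plan is to reduce the question to a purely combinatorial condition on the coordinates via the Baker--Pixley theorem. Since $S_{10}^k$ contains the near unanimity term $Th_k^{k+1}$ of arity $k+1$, Theorem~\ref{thm:BP} gives that $\one \in \Cl_{S_{10}^k}(\cS)$ if and only if for every set of indices $I$ of size $k$ one has $\one_I \in \Cl_{S_{10}^k}(\cS)_I = \Cl_{S_{10}^k}(\cS_I)$ (the operations act coordinate-wise). The heart of the argument is then the claim that, on $k$ coordinates, the all-ones vector is as hard to reach as possible: $\one_I \in \Cl_{S_{10}^k}(\cS_I)$ if and only if some $v \in \cS$ already satisfies $v_I = \one$. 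Granting this, I obtain the characterization that $\one \in \Cl_{S_{10}^k}(\cS)$ if and only if every $k$-subset $I$ of $[n]$ is \emph{covered}, i.e. some vector of $\cS$ is all ones on $I$.

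To prove the claim I would exhibit the separating invariant relation $R = \{0,1\}^{k} \setminus \{\one\}$, the $k$-ary relation ``not all ones''. Both generators of $S_{10}^k$ preserve $R$: $x \wedge (y \vee z)$ forces its first argument to be $1$ whenever it outputs $1$, so it cannot create a $\one$ from tuples each missing a coordinate; and $Th_k^{k+1}$, applied to $k+1$ tuples each carrying at least one $0$, must distribute at least $k+1$ zeros over only $k$ coordinates, so by pigeonhole some coordinate receives two or more zeros, where $Th_k^{k+1}$ then outputs $0$, and the result is not $\one$. Hence $R$ is closed under $S_{10}^k$. Consequently, if no $v \in \cS_I$ equals $\one$ then $\cS_I \subseteq R$ and so $\Cl_{S_{10}^k}(\cS_I) \subseteq R$, giving $\one \notin \Cl_{S_{10}^k}(\cS_I)$; the converse is trivial since $\cS_I \subseteq \Cl_{S_{10}^k}(\cS_I)$.

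Membership in $\coNP$ then follows from the contrapositive: $\one \notin \Cl_{S_{10}^k}(\cS)$ if and only if there is a $k$-subset $I$ of $[n]$ such that every $v \in \cS$ has a $0$ inside $I$, i.e. $I \cap \0(v) \neq \emptyset$ for all $v \in \cS$. Such an $I$ is a polynomial-size certificate checkable in time $O(mk)$, so the complementary problem is in $\NP$. For $\coNP$-hardness I would reduce from \textsc{Hitting Set} (equivalently \textsc{Set Cover} / \textsc{Vertex Cover}): given sets $F_1,\dots,F_m \subseteq [n]$ and an integer $k$ (we may assume $k \le n$), build vectors $v^1,\dots,v^m$ with $\0(v^j) = F_j$ and keep the same $k$. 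By the characterization, $\one \notin \Cl_{S_{10}^k}(\{v^1,\dots,v^m\})$ exactly when there is a set $I$ of size $k$ meeting every $F_j$, i.e. a transversal of size $k$; padding an arbitrary smaller transversal up to size exactly $k$ shows this is equivalent to the existence of a transversal of size at most $k$. Thus \textsc{Hitting Set} reduces to the complement of \textbf{ClosureTreshold}, which is therefore $\NP$-hard, and \textbf{ClosureTreshold} is $\coNP$-complete.

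The main obstacle is the claim that reaching $\one$ on $k$ coordinates is possible only when $\one$ is literally present, i.e. pinning down the exact invariant relation that blocks the closure; once $R = \{0,1\}^k\setminus\{\one\}$ is identified and shown $S_{10}^k$-closed, Baker--Pixley does the rest and the hardness is a textbook transversal reduction. A minor point to handle with care is that Baker--Pixley forces the separating index set to have size \emph{exactly} $k$, so the reduction must produce a transversal of size exactly $k$ rather than at most $k$, which is why the padding step is needed.
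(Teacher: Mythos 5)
Your proposal is correct and follows essentially the same route as the paper's proof: Baker--Pixley reduces membership of $\one$ to all index sets $I$ of size $k$, the pigeonhole argument shows that ``not all ones'' is preserved by $Th_k^{k+1}$ so that $\one\in \Cl_{S_{10}^k}(\cS_I)$ iff some vector of $\cS_I$ already equals $\one$, and $\coNP$-hardness comes from the same reduction from Hitting Set via complementing the hyperedges. The only (minor, beneficial) differences are that you also check the generator $x\wedge(y\vee z)$ preserves the invariant relation, where the paper instead assumes $k\geq 3$ so that $S_{10}^k=\langle Th_k^{k+1}\rangle$, and that you explicitly handle the padding needed to pass from a transversal of size at most $k$ to one of size exactly $k$.
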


\begin{proof}

First notice that the problem is in $\coNP$ since by Theorem~\ref{thm:BP}, the answer to the problem is negative if and only if one can exhibit a subset of indices of $I$ of size $k$ such that no elements of $\cS_{I}$ is equal to $\one$.

\medskip

Let us show that the Hitting Set problem can be reduced to ClosureTreshold. Given a hypergraph
$\cH=(V,\mathcal{E})$, the Hitting set  problem asks whether there exists
a subset $X\subseteq V$ of size $k$ that intersects all the hyperedges of $\cH$.
This problem is a classical NP-complete problem~\cite{garey1979computers}. Let $\cH=(V,\cE)$ be a hypergraph and
$k$ be an integer. Let $\bar{\cH}$ be the hypergraph on $V$ whose
hyperedges are the complementary of the hyperedges of $\cH$, and let
$\cS$ be the set of characteristic vectors of the
hyperedges of $\bar{\cH}$. Then $\cH$ has a transversal of size $k$ if and
only if there is a set $I$ of indices of size $k$ such for all $v\in\cS_I $,
$v \neq \one$.
Indeed, $I$ is a hitting set of $\cH$ if for all $E\in \cE$, there exists $i\in I$ such that $i\in E$ which implies that $i\notin \overline{E}$ and then the characteristic vector $v$ of $\overline{E}$ is such that $v_i=0$.

Let us show that a set $I$ of indices of size $k$ is such that no element in $\cS_I $ is equal to $\one$ if and only if no element of $\Cl_{S_{10}^k}(\cS_I)$ is equal to $\one$. We assume that $k\geq 3$ hence $S_{10}^k = <Th_k^{k+1}>$.
Remark that if no element in $\cS_I $ is equal to $\one$, then the application
of $Th_k^{k+1}$ to $\cS_I$ preserves this property. Indeed, let consider $Th_k^{k+1}(v^1,\dots,v^{k+1})$,
each $v^i$ has a zero coefficient and since there are $k+1$ such vectors and the vectors are of size $k$, by the pigeonhole principle,
there are $i,j,l$ such that $v^i_l = v^j_l = 0$. This implies that $Th_k^{k+1}(v^1,\dots,v^{k+1})\neq \one$.

Since the other direction is straightforward, we have thus proved that there is a set $I$ of indices of size $k$ such that for all $v\in\cS_I $,
$v \neq \one$ if and only if there is a set $I$ of indices of size $k$ such that for all $v\in\Cl_{S_{10}^k}(\cS_I)$,
$v \neq \one$. By Theorem~\ref{thm:BP}, the later property is equivalent to $ \one \notin \Cl_{S_{10}^k}(\cS)$.
Therefore we have given a polynomial time reduction from Hitting set to the complementary of ClosureTreshold which proves the proposition.
\qed \end{proof}

In fact, the result is even stronger. We cannot hope to get an FPT algorithm for ClosureTreshold parametrized by $k$
since the Hitting Set problem parametrized by the size of the hitting set is $\W[2]$-complete~\cite{flum2006parameterized}.
It means that if we want to significantly improve the delay of our enumeration algorithm for the clone $S_{10}^k$, we should drop
the backtrack search since it relies on solving $\Clo_{S_{10}^k}$.

\subsection{Limits of the infinite parts}\label{sec:limit}

Here we deal with the two cases left which are the limits of the two infinite hierarchies of clones we have seen in the previous subsection.
Let begin with $S_{12} = < x \wedge (y \to z)>$.

\begin{remark}\label{rmq:at_least_one}
    Let $\cS$ be a vector set and assume that there exists a $i\in [n]$ such that for all $v\in \cS$, $v_i=1$ (resp. $v_i=0$) then for all $w\in \Cl_{S_{12}}(\cS)$ we have $w_i=1$ (resp.  $w_i=0$). Then we will assume in this section that for all $i\in [n]$ there is at least a vector $v$ in $\cS$ with $v_i=1$ and a vector $w$ with $w_i=0$.
\end{remark}

\begin{theorem}\label{thm:}
    Let $\cS$ be a vector set, a vector $v$ belongs to $\Cl_{S_{12}}(\cS)$ if and only if
\begin{itemize}
  \item there exists $w\in \cS$ such that $\1 (v) \subseteq  \1 (w)$
   \item for all $(k,i)\in \1 (v)\times \0 (v)$ there exists $w\in \cS$ with $w_{k,i}=(0,1)$ or $w_{k,i}=(1,0)$
\end{itemize}

\end{theorem}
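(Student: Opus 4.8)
The plan is to prove the two implications separately; the forward one is an invariance argument and the backward one an explicit construction inside $\Cl_{S_{12}}(\cS)$. Write $f(a,b,c)=a\wedge(\neg b\vee c)$ for the generating operation of $S_{12}$; on sets it reads $f(a,b,c)=a\setminus(b\setminus c)$, so $\1(f(a,b,c))\subseteq\1(a)$, and in particular $f(x,x,z)=x\wedge z$, which means arbitrary intersections are available inside the closure. For the direction ($\Longrightarrow$) I would let $T$ be the set of vectors satisfying the two displayed conditions and show $\Cl_{S_{12}}(\cS)\subseteq T$ by checking $\cS\subseteq T$ (take $w=v$ in both conditions) and that $T$ is closed under $f$. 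For the latter, given $a,b,c\in T$ and $v=f(a,b,c)$, the first condition for $v$ follows from $\1(v)\subseteq\1(a)$ and the first condition for $a$; for the second, fixing $k\in\1(v)$ and $i\in\0(v)$, I split on whether $a_i=0$ (the separator of $a$ at $(k,i)$ works) or $a_i=1$, in which case $v_i=0$ forces $b_i=1,c_i=0$ and $v_k=1$ forces $b_k=0$ or $c_k=1$, so that either $b$ or $c$ separates $k$ and $i$. The symmetry of the second condition in its two indices is what makes this last step go through.

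For the direction ($\Longleftarrow$), assume $v$ satisfies both conditions and fix $w^*\in\cS$ with $\1(v)\subseteq\1(w^*)$. I would first reduce the problem to building, for each $i\in\0(v)$, a vector $u^i\in\Cl_{S_{12}}(\cS)$ with $\1(v)\subseteq\1(u^i)$ and $u^i_i=0$. This suffices because $\bigwedge_{i\in\0(v)}u^i$ is then exactly $v$ and lies in the closure, intersection being available. The degenerate cases are immediate: if $\0(v)=\emptyset$ then $v=\one=w^*\in\cS$, and if $\1(v)=\emptyset$ then $v=\zero$ is the intersection of vectors of $\cS$ chosen to vanish on each coordinate, which exist by the Remark.

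The core is the construction of $u^i$, which I would obtain from an auxiliary vector $b^i\in\Cl_{S_{12}}(\cS)$ satisfying $b^i_i=1$ and $b^i_k=0$ for all $k\in\1(v)$: choosing any $c^i\in\cS$ with $c^i_i=0$ and setting $u^i:=f(w^*,b^i,c^i)=w^*\setminus(b^i\setminus c^i)$ then deletes coordinate $i$ while leaving $\1(v)$ untouched, since $b^i\setminus c^i$ contains $i$ but meets $\1(v)$ nowhere. To produce $b^i$, I start from some $s\in\cS$ with $s_i=1$ and turn off the coordinates of $\1(v)$ one after another while keeping $i$ on: for a coordinate $k$, the second condition yields $s'\in\cS$ with $s'_k\neq s'_i$; if $s'_i=0$ I replace the current vector $g$ by $f(g,s',c)$ with some $c$ vanishing at $k$, which deletes $k$ but not $i$; if $s'_i=1$ I replace $g$ by $g\wedge s'$, again deleting $k$ and keeping $i$. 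Since both operations only switch coordinates off, coordinate $i$ survives and every processed coordinate stays off, so the final vector is $b^i$.

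The step I expect to be the main obstacle is exactly this synthesis of $b^i$ (equivalently $u^i$). A direct attempt to delete $i$ from $w^*$ fails: protecting all of $\1(v)$ while removing a single coordinate cannot be done by one difference, and in the worst case $\cS$ contains no vector of the orientation ``$i$ on, $k$ off'' that one would like to use. The insight that unblocks the construction is the asymmetry between the two tasks---deleting the whole set $\1(v)$ while protecting the single coordinate $i$ is easy, because each coordinate of $\1(v)$ can be removed independently by one application of $f$ or $\wedge$, and the separators of the missing orientation, though absent from $\cS$, get built inside the closure as the elimination proceeds.
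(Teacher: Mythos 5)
Your proof is correct; both implications hold and every intermediate vector you build genuinely lies in the closure. It also shares the paper's overall skeleton: an invariance argument for ($\Longrightarrow$), and for ($\Longleftarrow$) a reduction to producing, for each $i\in\0(v)$, a closure element containing $\1(v)$ and vanishing at $i$, after which $v$ is recovered as the intersection $\bigwedge_{i\in\0(v)}u^i$. But the machinery inside is genuinely different. The paper pivots on a standalone Claim: the pattern $(1,0)$ is achievable at positions $(k,i)$ in $\Cl_{S_{12}}(\cS)$ if and only if some member of $\cS$ separates $k$ from $i$ in one of the two orientations, the orientation flip being realized by $x\wedge(v\to y)$ with $x_k=1$, $y_i=0$ (these exist by the standing Remark, which you also rely on). Both of the paper's directions lean on this Claim: ($\Longrightarrow$) uses its invariance half, and ($\Longleftarrow$) builds each required vector \emph{upward}, switching the coordinates of $\1(v)$ on one at a time while keeping $i$ off, via $w^l=w\wedge(u^l\to w^{l-1})$, where each $u^l$ is a pairwise separator supplied by the Claim. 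You dispense with the Claim entirely: your forward direction shows in one induction that the set $T$ of vectors satisfying the two conditions contains $\cS$ and is closed under $f(x,y,z)=x\wedge(y\to z)$ (your case split on $a_i$, then on $b_k$ versus $c_k$, is exhaustive, and the appeal to the symmetry of the second condition is exactly what is needed); your converse works \emph{downward}, aggregating all separators into a single vector $b^i$ with $i$ on and all of $\1(v)$ off by successive deletions from a seed, then removing it from $w^*$ in one application $f(w^*,b^i,c^i)$. The trade-off: the paper's Claim isolates a reusable pairwise fact, which also mirrors the structure of the subsequent $S_{10}$ theorem, at the cost of an extra lemma; your route is self-contained and slightly easier to verify, since each of your steps is decreasing below its first argument, so previously deleted coordinates stay deleted for free, whereas the paper's upward construction must carry three inductive invariants ($w^l_{j_r}=1$ for $r<l$, $w^l_{j_l}=1$, $w^l_i=0$) through every step.
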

\begin{proof}
Let us start by proving the following claim.\\
\noindent\textbf{Claim}: Let $k,i\in [n]$. Then there exists $u\in \Cl_{S_{12}}(\cS)$ such that $u_{k,i}=(1,0)$ if and only if there exists $v\in \cS$ such that $v_{k,i}=(1,0)$ or $v_{k,i}=(0,1)$.

 Assume first that there exists $v\in \cS$ such that $v_{k,i}=(0,1)$. Let $x\in \cS$ such that $x_k=1$ and $y\in \cS$ such that $y_i=0$. Without loss of generality, such vectors exist by the assumption of Remark \ref{rmq:at_least_one}. Then $u:=x \wedge (v\to y)$ has the sought property, i.e $u_{k,i}=(1,0)$.
 Assume now that for all $v\in \cS$, $v_{k,i}\neq (1,0)$ and $v_{k,i}\neq (0,1)$. We show that this property is preserved by the application of
 $x \wedge (y \to z)$. For all $v\in \cS$, $v_{k,i} = (1,1)$ or $v_{k,i}= (0,0)$. Since the function $x \wedge (y \to z)$ acts coordinate-wise on the vectors, if we consider $ w = x \wedge (y \to z)$ with $x,y,z \in \cS$ we must have $w_i = w_k$. Therefore $w_{k,i} \neq  (1,0)$ and $w_{k,i}\neq (0,1)$ which implies by induction that there is no $v$ with $v_{k,i} =(0,1)$ and $v \in \Cl_{S_{12}}(\cS)$.
%
%
We can now prove the theorem.
\medskip

   \noindent($\Longleftarrow$) We can simulate $w\wedge v$  with $w\wedge (w \to v)$.
    We will show that for all $i\in \0 (v)$ either there exists a vector $v^{i}\in \cS$ such that $\1 (v)\subseteq \1 (v^{i})$ and $v^{i}_{i}=0$ or we can construct it. Notice that it is sufficient in order to prove that $v\in \Cl_{S_{12}}(\cS)$ since we have $v=\bigwedge\limits_{i\in \0 (v)} v^{i}$. So let $i\in\0 (v)$ and assume that for all $w\in \cS$  such that $\1 (v) \subseteq \1 (w)$ we have $w_i=1$. Let $w$ be such a vector and let $\1 (v)=\{j_1,j_2,...,j_k\}$. We will construct a sequence of vector $(w^{l})_{l\leq k}$ such that for all $l\leq k$ and for all $r\leq l$, $w^{l}_{j_r}=1$ and $w^{l}_{i}=0$. Let $w^{1}$ be the vector with $w^{1}_{j_1}=1$ and  $w^{1}_{i}=0$. By the claim, such a vector exists in $\Cl_{S_{12}}(\cS)$. Now for all $l\leq k$, let us define  $w^{l}:=w\wedge (u^{l} \to w^{l-1})$ where $u^{l}$ is a vector such that $u^{l}_{j_l}=0$ and $u^{l}_{i}=1$ and there is such a vector in $\Cl_{S_{12}}(\cS)$ by the claim. Since by induction we have $w^{l-1}_i=0$, and since $u^{l}_{i}=1$, we have $(u^{l} \to w^{l-1})_{i}=0$ and thus $w^{l}_{i}=0$. Now since $u^{l}_{j_l}=0$ and $w_{j_l}=1$ we have $w^{l}_{j_l}=1$. Finally, for all $r < l$, we have $w_{j_r}$ and $w^{l-1}_{j_r}=1$. Hence $w^{l}_{j_r}=1$. We obtain that  $\1 (v)\subseteq \1 (w^{k})$ and $w^{k}_i=0$.
\medskip

\noindent($\Longrightarrow$)
Let $v\in \Cl_{S_{12}}(\cS)$.
Notice that if $v=x\wedge(y\to z)$, then $\1(v)\subseteq \1(x)$. Thus, there exists $w\in \cS$ such that $\1(v)\subseteq \1(w)$. Now, by the claim, for all $k,i\in [n]$ such that $v_{k,i}=(1,0)$ there exists $w\in \cS$ such that $w_{k,i}=(1,0) $ or $w_{k,i}=(0,1)$ which conclude the proof.

\qed \end{proof}

\begin{corollary}\label{cor:celle_la_aussi}
      $\Clo_{S_{12}}$ is polynomial.
\end{corollary}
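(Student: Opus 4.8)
The plan is to invoke the characterization furnished by the theorem immediately preceding the corollary and simply verify that both of its conditions can be checked in polynomial time. Recall that, under the normalization of Remark~\ref{rmq:at_least_one}, a vector $v$ lies in $\Cl_{S_{12}}(\cS)$ if and only if (i) there is some $w \in \cS$ with $\1(v) \subseteq \1(w)$, and (ii) for every pair $(k,i) \in \1(v) \times \0(v)$ there is some $w \in \cS$ with $w_{k,i} = (0,1)$ or $w_{k,i} = (1,0)$. So deciding $\Clo_{S_{12}}$ reduces to testing these two conditions on the input vector $v$ against the set $\cS$.

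For the first condition, I would iterate over the $m$ vectors of $\cS$ and, for each candidate $w$, check whether $w_k = 1$ holds at every index $k \in \1(v)$; this costs $O(n)$ per vector and hence $O(mn)$ overall. For the second condition, the natural approach is to reuse the precomputation idea already exploited in Proposition~\ref{prop:speedup}: for every ordered pair of indices $(k,i)$, precompute in a single pass over $\cS$ the set of values $(w_k, w_i)$ that are realized by some $w \in \cS$. This precomputation takes $O(mn^2)$ time and $O(n^2)$ space. Once it is available, checking condition (ii) amounts to looking up, for each of the at most $n^2$ pairs $(k,i) \in \1(v) \times \0(v)$, whether the value $(1,0)$ or $(0,1)$ was observed, which is $O(1)$ per pair and $O(n^2)$ in total.

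Combining the two tests, the whole decision procedure runs in $O(mn^2)$ time, which establishes that $\Clo_{S_{12}} \in \P$. I do not anticipate a genuine obstacle here, since all the real work is done by the characterization theorem; the only point requiring a little care is to phrase the algorithm so that the quadratic number of index pairs in condition (ii) is handled through the lookup table rather than by rescanning $\cS$ for each pair, which would otherwise inflate the running time. I would also note explicitly that the normalization of Remark~\ref{rmq:at_least_one} is without loss of generality, since coordinates that are constant across all of $\cS$ are constant across $\Cl_{S_{12}}(\cS)$ and can be removed in a linear-time preprocessing step before applying the characterization.
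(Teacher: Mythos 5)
Your proposal is correct and matches the paper's intent: the paper states the corollary with no written proof precisely because, as you observe, it follows immediately from checking the two conditions of the preceding characterization theorem in polynomial time (under the harmless normalization of Remark~\ref{rmq:at_least_one}). Your explicit $O(mn^2)$ bound via the pair-table precomputation is a reasonable instantiation of that routine verification.
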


Finally, we deal with the clone $S_{10} = <x\wedge (y\vee z)>$.
The characterization of $\Cl_{S_{10}}(\cS)$ we give is very similar to the one of $\Cl_{S_{12}}(\cS)$
and the proof works in the same way.

\begin{theorem}\label{thm:}
    Let $\cS$ be a vector set, a vector $v$ belongs to $\Cl_{S_{10}}(\cS)$ if and only if
\begin{itemize}
  \item there exists $w\in \cS$ such that $\1 (v) \subseteq  \1 (w)$
   \item for all $(k,i)\in \1 (v)\times \0 (v)$ there exists $w\in \cS$ with $w_{k,i}=(1,0)$
\end{itemize}
\end{theorem}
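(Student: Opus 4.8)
The plan is to mirror, step for step, the proof of the $\Cl_{S_{12}}(\cS)$ characterization, adapting everything to the monotone operation $x \wedge (y \vee z)$. As a preliminary I would invoke the exact analog of Remark~\ref{rmq:at_least_one}: the operation $x \wedge (y \vee z)$ is idempotent with $f(0,0,0)=0$ and $f(1,1,1)=1$, so it preserves any coordinate on which all vectors of $\cS$ agree; hence I may assume that for every $i\in[n]$ some vector of $\cS$ carries a $0$ in position $i$ and some other carries a $1$. This assumption is exactly what keeps the statement correct on the degenerate target $v=\zero$, where both bullets hold vacuously.

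The core ingredient is the monotone version of the claim used for $S_{12}$: \emph{there exists $u\in\Cl_{S_{10}}(\cS)$ with $u_{k,i}=(1,0)$ if and only if there is already some $v\in\cS$ with $v_{k,i}=(1,0)$}. The backward implication is immediate since $\cS\subseteq\Cl_{S_{10}}(\cS)$; this is strictly simpler than in the $S_{12}$ case and is precisely why the second bullet of the theorem requires only a $(1,0)$ witness and never a $(0,1)$ one. For the forward implication I would argue by preservation: if no $v\in\cS$ has $v_{k,i}=(1,0)$, i.e. $v_k\le v_i$ for all $v\in\cS$, then this inequality is preserved by $f$, because if $w=a\wedge(b\vee c)$ with $w_k=1$ then $a_k=1$ and one of $b_k,c_k$ is $1$, forcing $a_i=1$ and $b_i\vee c_i=1$, hence $w_i=1$. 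By induction $v_k\le v_i$ holds throughout $\Cl_{S_{10}}(\cS)$, so no closure element attains $(1,0)$ on $(k,i)$.

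For the direction ($\Longrightarrow$), take $v\in\Cl_{S_{10}}(\cS)$. The first bullet follows from a short induction on the construction of $v$ using $f(a,b,c)=a\wedge(b\vee c)\le a$: every closure element lies below some element of $\cS$, so $\1(v)\subseteq\1(w)$ for a suitable $w\in\cS$. The second bullet is then just the claim applied to each pair $(k,i)\in\1(v)\times\0(v)$, where indeed $v_{k,i}=(1,0)$.

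The direction ($\Longleftarrow$) is where I expect the real work to be. I would rebuild $v$ as $v=\bigwedge_{i\in\0(v)}v^i$, which is legitimate since $\wedge\in S_{10}$ (take $y=z$ in $f$), so it suffices to produce for each $i\in\0(v)$ a vector $v^i\in\Cl_{S_{10}}(\cS)$ with $\1(v)\subseteq\1(v^i)$ and $v^i_i=0$. If some $w\in\cS$ with $\1(v)\subseteq\1(w)$ already has $w_i=0$ (when $\1(v)=\emptyset$, any vector with a $0$ in position $i$, which exists by the remark), take $v^i=w$. Otherwise, writing $\1(v)=\{j_1,\dots,j_s\}$ and using the second bullet to select $y^r\in\cS$ with $y^r_{j_r}=1$ and $y^r_i=0$, I would iterate $w^1:=y^1$ and $w^l:=w\wedge(w^{l-1}\vee y^l)=f(w,w^{l-1},y^l)$ for $2\le l\le s$, where $w\in\cS$ satisfies $\1(v)\subseteq\1(w)$, and set $v^i:=w^s$. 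A routine induction gives $w^l_i=0$ (from $w^{l-1}_i=y^l_i=0$) and $w^l_{j_r}=1$ for all $r\le l$ (from $w_{j_r}=1$ together with $y^l_{j_r}=1$ or $w^{l-1}_{j_r}=1$). The main obstacle, exactly as for $S_{12}$, is that $\vee$ alone is not in $S_{10}$: every disjunction must be consumed inside a copy of $f$ guarded by the ``mask'' $w$, and the care of the proof lies entirely in verifying that this guarding never erases the $1$'s on $\1(v)$ nor the $0$ at position $i$.
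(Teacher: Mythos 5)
Your proposal is correct and takes essentially the same approach as the paper's own proof: the forward direction is the same induction/preservation argument (using $f(a,b,c)=a\wedge(b\vee c)\le a$ for the first bullet and invariance of the ``no $(1,0)$ pattern at $(k,i)$'' property for the second), and the backward direction is the same masked-join construction, realizing a conjunction of the form $w\wedge(y^{1}\vee\dots\vee y^{k})$ by iterating $x\wedge(y\vee z)$, with $\wedge\in S_{10}$ obtained by setting $y=z$. The only differences are bookkeeping: the paper masks with the global meet $t=\bigwedge_{\1(v)\subseteq\1(u)} u$, joins \emph{all} vectors having a $0$ at position $i$, and fixes the bad coordinates sequentially, whereas you mask with a single $w$ covering $\1(v)$, join one selected witness $y^{r}$ per coordinate of $\1(v)$, and take a final meet $\bigwedge_{i\in\0(v)} v^{i}$ --- both constructions produce the same closure elements.
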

\begin{proof}

\noindent($\Longleftarrow$)
Assume first that  $v\in \Cl_{S_{10}}(\cS)$.
Notice that if $v=x\wedge(y \vee z)$, then $\1(v)\subseteq \1(x)$. Thus by a simple induction, there exists $w\in \cS$ such that $\1(v)\subseteq \1(w)$.

Now let $(k,i)\in \1 (v)\times \0 (v)$. Let us show that if for all $w\in \cS$ $w_{k,i}\neq (1,0)$, then $u_{k_i}\neq (1,0)$ for all $u\in \Cl_{S_{10}}(\cS)$ and then $v\notin \Cl_{S_{10}}(\cS)$. It is sufficient to show that this property is preserved by the operation $x\wedge (y\vee z)$. So let $a,b$ and $c$ be three boolean vectors such that $a_{k,i}\neq (1,0)$, $b_{k,i}\neq (1,0)$, $c_{k,i}\neq (1,0)$ and let $d=a\wedge (b\vee c)$. Assume that $d_{i}=0$. Then either $a_i=0$ or both $b_i$ and $c_i$ are $0$. If $a_i=0$ then $a_k=0$ since $a_{k,i}\neq (1,0)$ and then $d_k=a_k\wedge(b_k\vee c_k=0)=0$. Now if $b_i=0$ and $c_i=0$, we have $b_k=0$ and $c_k=0$ and then $d_k=a_k\wedge(b_k\vee c_k=0)=0$. We conclude that $d_{k,i}\neq (0,1)$

\medskip

\noindent($\Longrightarrow$)
Assume that there exists $u\in \cS$ such that $\1 (v) \subseteq  \1 (u)$ and
for all $(k,i)\in \1 (v)\times \0 (v)$ there exists $w\in \cS$ with $w_{k,i}=(1
,0)$. Notice that $\wedge\in S_{10}$ since $a\wedge b= a\wedge (b\vee
 b)$. Let $t:=\bigwedge\limits_{u\in \cS,~ \1(v)\subseteq \1(u)} u$. We have $\1(
v)\subseteq \1(t)$. Either $v=t$ or there is $i\in [n]$ for which $v_{i}=0$ and $
t_i=1$. For each such coordinate $i$, we will show how to construct a vector $
t'$ such that $t'_i=0$ and such that $\1(v)\subseteq \1(t') \subseteq \1(t)$.
Let $x:=\bigvee\limits_{u\in \cS,~ u_i=0}u$. Notice that $x_i=0$ and since for all $j\in \1(v)$ there exists $w\in \cS$ such that $w_{i,j}=(0,1)$ we have $\1(v)\subseteq \1(x)$. Now let us define $t':=t\wedge x$. It is easy to see that $t'$ satisfies the conditions sought.
To construct $t'$ we proceed as follow. Let $\{y^{1},...,y^{k}\}:=\{u\in \cS \mid u_i=0\}$. Then let us construct the following sequence of vectors  $t^{1}:=t\wedge (y^{1}\vee y^{2})$, $t^{2}:=t\wedge (t^{1}\vee y^{3})$, ..., $t^{k-1}:=t\wedge (t^{k-2}\vee y^{k})$. It is easy to see that $t'=t^{k-1}$, and then $t'\in \Cl_{S_{10}}(\cS)$. We conclude that $v\in \Cl_{S_{10}}(\cS)$. Indeed starting from  $t$, we can apply the previous procedure to set to $0$ each index $i$ for which $v_i=0$ and $t_i=1$.
\qed \end{proof}

\begin{corollary}\label{prop:}
$\Clo_{S_{10}}$ is polynomial.
\end{corollary}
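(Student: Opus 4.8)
The plan is to derive the corollary directly from the characterization of $\Cl_{S_{10}}(\cS)$ established in the theorem above: deciding $\Clo_{S_{10}}$ amounts to testing, for a given vector $v$ and input set $\cS$, the two stated membership conditions, and I will argue that each of them can be checked in polynomial time. Since the combinatorial content is entirely carried by the preceding theorem, the corollary reduces to a complexity bookkeeping argument, exactly as for the analogous clone $S_{12}$.

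First I would handle the first condition, namely the existence of $w\in\cS$ with $\1(v)\subseteq\1(w)$. This is tested by scanning the $m$ vectors of $\cS$ one by one and, for each, verifying the inclusion of the support of $v$ in the support of $w$ in time $O(n)$; the whole test thus costs $O(mn)$.

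Then I would handle the second condition. Here the key is to \emph{precompute}, for every ordered pair of coordinates $(k,i)$ with $k,i\in[n]$, a boolean flag recording whether some $w\in\cS$ satisfies $w_{k,i}=(1,0)$. This table can be built with a single pass over $\cS$ in which each vector updates $O(n^2)$ flags, for a total cost of $O(mn^2)$. Once the table is available, verifying that for all $(k,i)\in\1(v)\times\0(v)$ such a witness exists reduces to at most $n^2$ constant-time lookups.

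Since $m$ and $n$ are polynomially related to the size of the input, the overall running time $O(mn^2)$ is polynomial, which establishes $\Clo_{S_{10}}\in\P$. I do not anticipate any genuine obstacle, as all the structural work has already been done in the theorem; the only point of care is to precompute the pairwise witness table so that the second condition is not paid for $O(m)$ per queried pair, which would needlessly inflate the bound.
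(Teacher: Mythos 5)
Your proposal is correct and follows exactly the route the paper intends: the corollary is stated without proof precisely because it is an immediate consequence of the preceding characterization theorem, with membership decided by checking the two conditions in polynomial time. Your explicit $O(mn^2)$ bookkeeping (linear scan for the support-inclusion condition, precomputed pairwise witness table for the second) is a sound and slightly more detailed account of that same argument.
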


\section{Larger Domains}\label{sec:largerdomains}

In this section, we try to extend some results of the boolean domain to larger domains.

\subsection{Tractable closure}

We exhibit two families of clones $\mathcal{F}$ such that $\Clo_\mathcal{F} \in \P$. As a result, we obtain a polynomial delay algorithm for $\EnumClo_\mathcal{F}$
 using the backtrack search.

The first tractable case is an extension of the clones of Subsection~\ref{sec:threshold}.
Indeed using Th.~\ref{thm:BP}, we can get an equivalent to Corollary.~\ref{cor:coro} in any domain size.

\begin{corollary}
If $\mathcal{F}$ contains a near unanimity operation, then $\Clo_\mathcal{F} \in P$.
\end{corollary}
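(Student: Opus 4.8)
The plan is to reduce the larger-domain case directly to the Baker--Pixley theorem (Th.~\ref{thm:BP}), exactly as was done for the boolean clones containing a threshold function in Corollary~\ref{cor:coro}. The key observation is that Th.~\ref{thm:BP} was stated for an arbitrary domain $D$: if $\mathcal{F}$ is a clone containing a near unanimity term of arity $k$, then $v \in \Cl_{\mathcal{F}}(\cS)$ if and only if for every set of indices $I$ of size $k-1$ we have $v_I \in \Cl_{\mathcal{F}}(\cS)_I$. So the whole membership test factors through the ``projected'' closures on all $(k-1)$-subsets of coordinates. The nontrivial point I must check carefully is that $|D|$ is now a constant $d > 2$ rather than $2$, and argue that this only changes the complexity by a factor depending on $d$ and $k$, both of which are constants.

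First I would invoke the commutation of closure with projection: since every operation in $\mathcal{F}$ acts coordinate-wise, one has $\Cl_{\mathcal{F}}(\cS)_I = \Cl_{\mathcal{F}}(\cS_I)$, just as in the proof of Corollary~\ref{cor:coro}. This lets me replace each global projected-closure test by a local closure computation on the restricted instance $\cS_I$, which is a set of vectors of length $k-1$ over $D$. Next I would bound the cost of each local computation. For a fixed $I$, building $\cS_I$ takes time $O(m)$. The set $\Cl_{\mathcal{F}}(\cS_I)$ lives inside $D^{k-1}$, so it has at most $d^{k-1}$ elements; since $d$ and $k$ are constants this is a constant, and the naive incremental saturation algorithm (repeatedly applying the finitely many operations of $\mathcal{F}$ until no new vector appears) terminates in constant time. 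Checking whether $v_I$ lies in this set is then a constant-time lookup.

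The final step is to iterate over all $\binom{n}{k-1} = O(n^{k-1})$ index sets $I$, perform the $O(m)$ construction and constant-time membership check for each, and accept $v$ precisely when all tests succeed, by Th.~\ref{thm:BP}. The total running time is $O(m \, n^{k-1})$, which is polynomial since $k$ is a constant determined by $\mathcal{F}$. Having shown $\Clo_{\mathcal{F}} \in \P$, the enumeration consequence $\EnumClo_{\mathcal{F}} \in \DelayP$ follows immediately from the earlier proposition relating $\Clo_{\mathcal{F}} \in \P$ to $\EnumClo_{\mathcal{F}} \in \DelayP$ via the backtrack search.

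I do not expect any serious obstacle here: the work was already absorbed into proving the domain-independent form of the Baker--Pixley theorem, and the present statement is essentially a restatement of Corollary~\ref{cor:coro} with the boolean hypothesis dropped. The only subtlety worth a sentence in the write-up is to make explicit that $d = |D|$ is treated as a constant, so that the cardinality bound $d^{k-1}$ on the local closures and the cost of the local saturation remain constant; without that convention the ``constant time'' claims for the local computations would not hold.
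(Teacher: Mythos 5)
Your proposal is correct and follows exactly the paper's intended argument: the paper proves this corollary by observing that Theorem~\ref{thm:BP} is domain-independent and then reusing the algorithm of Corollary~\ref{cor:coro} verbatim (project onto all index sets of the appropriate constant size, saturate each constant-size instance, and test membership). Your added remark that $|D|$ must be treated as a constant so the local closures have constant size is the only point the paper leaves implicit, and you handle it correctly.
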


In particular, by using the same method as in Prop.~\ref{prop:speedup} we get the following result.

\begin{proposition}
 If $\mathcal{F}$ contains a near unanimity term of arity $k$, then $\EnumClo_\mathcal{F}$ can be solved in
 delay $O(n^{k-1})$.
\end{proposition}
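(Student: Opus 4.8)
The plan is to combine the Baker--Pixley theorem (Th.~\ref{thm:BP}) with the amortized backtrack-search technique of Prop.~\ref{prop:speedup}. By Th.~\ref{thm:BP}, a near unanimity term of arity $k$ gives the characterization $v \in \Cl_{\mathcal{F}}(\cS)$ if and only if for every set of indices $I$ of size $k-1$, $v_I \in \Cl_{\mathcal{F}}(\cS_I)$, where we use that $\Cl_{\mathcal{F}}(\cS)_I = \Cl_{\mathcal{F}}(\cS_I)$ because the operations act coordinate-wise. So the decision problem $\Clo_{\mathcal{F}}$ reduces to checking, for each of the $O(n^{k-1})$ index sets $I$, a membership in a closure over only $k-1$ coordinates, which lives in a domain of constant size and can be precomputed.

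First I would set up a precomputation phase that, for every $(k-1)$-subset $I$ of $[n]$, computes the set $\Cl_{\mathcal{F}}(\cS_I)$ by the incremental saturation algorithm; since $\cS_I$ consists of vectors over $k-1$ coordinates in the fixed finite domain $D$, each such closure has size bounded by $|D|^{k-1}$, a constant, so each is built in time $O(m)$ and the whole precomputation costs $O(m n^{k-1})$. This is outside the delay bound, as permitted for polynomial-delay algorithms.

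Next I would run the backtrack search of Prop.~\ref{prop:partialsolutions}, extending a partial solution $v$ of length $l-1$ to length $l$ by fixing the new coordinate $v_l$. The key amortization observation, exactly as in Prop.~\ref{prop:speedup}, is that $v_{[l-1]}$ already satisfies the Baker--Pixley condition, so when testing the extension we only need to recheck those index sets $I$ that contain the new coordinate $l$. The number of such $(k-1)$-subsets $I$ containing $l$ is $\binom{l-1}{k-2} = O(n^{k-2})$, and each check is a lookup of $v_I$ in the precomputed $\Cl_{\mathcal{F}}(\cS_I)$ performed in constant time. Thus deciding whether a single partial solution extends costs $O(n^{k-2})$, and summing the cost of the extension tests along a full branch of length $n$ gives a delay of $O(n \cdot n^{k-2}) = O(n^{k-1})$, as claimed.

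The main obstacle is getting the amortization bookkeeping exactly right: I must argue carefully that it genuinely suffices to recheck only the index sets touching the newly fixed coordinate, and that the per-node work is $O(n^{k-2})$ rather than $O(n^{k-1})$. This relies on maintaining the invariant that every partial solution already visited satisfies all Baker--Pixley constraints on its fixed coordinates, so no previously verified constraint can be invalidated by fixing one more coordinate; only the freshly created constraints involving coordinate $l$ are new. One should also note that, unlike the boolean case, the delay carries a hidden dependence on $d = |D|$ (through the branching factor and the closure sizes), but since $d$ is treated as constant this does not affect the stated $O(n^{k-1})$ bound.
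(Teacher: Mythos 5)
Your proof is correct and follows essentially the same route as the paper, which simply invokes ``the same method as in Prop.~\ref{prop:speedup}'' combined with the Baker--Pixley characterization: precompute $\Cl_{\mathcal{F}}(\cS_I)$ for all $(k-1)$-subsets $I$, then during the backtrack search recheck only the $O(n^{k-2})$ index sets containing the newly fixed coordinate, amortizing to $O(n^{k-1})$ per branch. Your write-up is in fact more explicit than the paper's one-line justification, and your accounting (including the constant dependence on $|D|$) matches the intended argument.
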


We could hope to increase the class of polynomial time decidable problems, by using other ideas from CSP.
For instance, we may try to prove that if a clone $\mathcal{F}$ contains a Maltsev operation (it generalizes the majority operation)
then $\Clo_\mathcal{F} \in P$.

The second tractable case is a generalization of Subsection~\ref{sec:algebra}.
We consider $\mathcal{F}$ the clone generated by the addition of two elements over $D$.
To decide $\Clo_\mathcal{F}$, we have to solve a linear system, which can also be done in polynomial time over any domain.
In fact we can further extend this result as shown in the next proposition.

\begin{proposition}
 Let $f$ be a commutative group operation over $D$, then $\Clo_{<f>} \in P$.
\end{proposition}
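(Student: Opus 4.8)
The plan is to reduce the closure computation to a linear-algebraic membership test, just as was done for $\Clo_{L_0}$ and $\Clo_{L_2}$, but now working over the abelian group $(D,f)$ instead of $\mathbb{F}_2$. First I would observe that since $f$ is a commutative group operation, the set $\Cl_{<f>}(\cS)$ has a clean algebraic description: the clone $<f>$ contains all operations built by composing $f$ with itself and the projections, so the closure is exactly the set of elements obtainable by combining the vectors of $\cS$ using the group operation. Writing the group additively, each element of $\Cl_{<f>}(\cS)$ should be expressible as a sum $\sum_{v \in \cS} c_v \cdot v$ where $c_v \in \mathbb{Z}_{\geq 0}$ counts how many times $v$ is used (with repetitions allowed, since terms may reuse variables). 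The subtlety compared to $\mathbb{F}_2$ is that $(D,f)$ is a finite abelian group that need not be elementary, so I must track the coefficients modulo the order of each element, i.e.\ work in the structure $(D,f)$ as a $\mathbb{Z}$-module.

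The key steps, in order, are as follows. First I would fix an additive notation for $(D,f)$ and identify, by the structure theorem for finite abelian groups, $D \cong \mathbb{Z}/d_1 \times \cdots \times \mathbb{Z}/d_r$, so that coordinate-wise membership reduces to a system of congruences. Second, I would argue that membership $v \in \Cl_{<f>}(\cS)$ is equivalent to the solvability, over the integers with coefficients interpreted in the appropriate cyclic factors, of the linear system $\sum_{w \in \cS} x_w \cdot w = v$ where $w$ ranges over $\cS$ and the vectors are compared coordinate-wise in $D$. Care is needed because closure terms must be genuine compositions of $f$: I would note that since $f$ generates a group structure, the constant needed to realize ``use $w$ zero times'' is available via cancellation (or one treats the empty combination separately), and that any integer combination is realizable as an $f$-term because $f$ has inverses in the clone-closure sense on the generated set. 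Third, having reduced to a system of linear congruences over the direct product of cyclic groups, I would invoke that solvability of such a system can be decided in polynomial time (e.g.\ by computing Smith normal form or by Gaussian elimination over each $\mathbb{Z}/d_i$), which yields $\Clo_{<f>} \in \P$.

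I expect the main obstacle to be the faithful translation between $f$-terms and integer linear combinations, specifically making sure that every integer combination $\sum x_w w$ (including with negative or large coefficients) is actually an element of $\Cl_{<f>}(\cS)$ and conversely. The forward inclusion (integer combinations lie in the closure) requires that inverses and identity are expressible; since $(D,f)$ is a group, the inverse of an element is a power of it under $f$, and these powers are generated by repeated application of $f$, so the closure is indeed closed under taking inverses coordinate-wise. The reverse inclusion (every closure element is such a combination) follows by a straightforward induction on the term structure, using commutativity and associativity of $f$ to collect equal arguments. Once this dictionary is established, the remaining work is the standard polynomial-time linear algebra over a finite abelian group, which I would state rather than carry out in detail.
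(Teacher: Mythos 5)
Your proposal is correct and follows essentially the same route as the paper: reduce membership in $\Cl_{<f>}(\cS)$ to solvability of the linear system $Ax=v$ over the finite abelian group $(D,f)$, decompose $D$ into cyclic factors via the structure theorem, and decide solvability factor by factor in polynomial time. You are in fact somewhat more careful than the paper on the dictionary between $f$-terms and integer combinations (identity and inverses via element orders) and on solving over $\mathbb{Z}/d_i$ when $d_i$ is not prime (Smith normal form), points the paper glosses over.
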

\begin{proof}
 We want to solve $\Clo_{<f>}$, given $\cS$ a set of vectors and $v$ a vector.
  Let $A$ be the matrix which has the elements of $\cS$ as rows.
  The vector $v$ is in $\Clo_{<f>}(\cS)$ if and only there is a vector $x$ with coefficients in $\mathbb{Z}$ such that $Ax = v$.
  This equation is not over a field so we cannot solve it directly.
 We apply a classical group theorem to the finite commutative group $(D,f)$, which states that
 $D$ is a direct sum of cyclic groups $D_1,\dots,D_t$ whose order is the power of a prime.
  The equation $Ax = v$ can be seen as a set of equations over fields: $A_ix_i = v_i$, for $i\leq t$,
  where $A_i$, $x_i$ and $v_i$ are the projection of $A$, $x$ and $v$ over $D_i$. We can easily reconstruct an $x$ which have the
  projections $x_i$ on $D_i$ by the Chinese remainder theorem. Therefore, deciding whether $v \in \Clo_{<f>}(\cS)$ is equivalent to solving a
  set of linear systems and hence is in polynomial time.
 \qed \end{proof}

 One natural generalization would be to allow the function $f$ to be non commutative.
 In that case, we conjecture that $\Clo_{<f>}$ is $\NP$-hard.

\subsection{A limit to the backtrack search}

The last case we would like to extend is the clone generated by the conjunction.
A natural generalization is to fix an order on $D$ and to study the complexity
of $\Clo_{<f>}$ with $f$ monotone.
Let $f$ be the function over $D=\{0,1,2\}$ defined by  $f(x,y) = x+y$ if $x+y \leq 2$ and
$f(x,y) = 2$ otherwise. This function is clearly monotone for the usual order.

\begin{proposition}
 $\Clo_{<f>}$ is $\NP$-complete.
\end{proposition}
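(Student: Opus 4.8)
The plan is to prove both membership in $\NP$ and $\NP$-hardness, exploiting the fact that $f$ is associative and commutative, so that $f(x,y)=\min(x+y,2)$ extends to $f(x_1,\dots,x_k)=\min\!\big(\sum_j x_j,\,2\big)$. Consequently $\Cl_{<f>}(\cS)$ is exactly the set of \emph{saturated multiset sums}: a vector $v$ lies in $\Cl_{<f>}(\cS)$ if and only if there are nonnegative integers $(c_w)_{w\in\cS}$, not all zero, with $\min\!\big(\sum_{w\in\cS} c_w w_i,\,2\big)=v_i$ for every coordinate $i$. I would first establish this characterization from the definition of the closure, noting that the projections let us feed the same element several times to $f$, so every multiset of elements of $\cS$ is realizable, and that $\Cl_{<f>}=\Cl_{\{f\}}$.

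For membership in $\NP$, I would show that the multiplicities may always be capped at $2$: given any feasible $(c_w)$, the capped vector $c'_w:=\min(c_w,2)$ is still feasible. Coordinates with $v_i=0$ are unaffected, since every contributing $w$ already has $c_w=0$ there; coordinates with $v_i=1$ force $\sum_w c_w w_i=1$, so any $w$ with $c_w\ge 3$ must satisfy $w_i=0$ and capping changes nothing; coordinates with $v_i=2$ require $\sum_w c_w w_i\ge 2$, and a short case analysis (either two distinct contributors survive capping, or a single contributor with $c_w\ge 2$ keeps a contribution $\ge 2$) shows the inequality is preserved. Hence a polynomial-size certificate $(c'_w)\in\{0,1,2\}^{|\cS|}$ always exists, giving $\Clo_{<f>}\in\NP$.

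For $\NP$-hardness I would reduce from \textsc{Exact Cover}, which is $\NP$-complete~\cite{garey1979computers}: given a nonempty universe $U$ and nonempty sets $S_1,\dots,S_m\subseteq U$, decide whether some subcollection partitions $U$. Take the coordinates to be the elements of $U$, let $w^j\in\{0,1\}^{|U|}\subseteq D^{|U|}$ be the characteristic vector of $S_j$, set $\cS=\{w^1,\dots,w^m\}$, and take the target $v=\one$. Then $v\in\Cl_{<f>}(\cS)$ iff there are multiplicities with $\sum_j c_j w^j_u=1$ for every $u\in U$ (the value must be exactly $1$, neither $0$ nor the saturated $2$). Since the entries are $0/1$, this forces every $c_j\in\{0,1\}$ (any $c_j\ge 2$ would push a coordinate of $S_j$ to the value $2$) and forces the chosen sets to be pairwise disjoint and to cover $U$, i.e. to form an exact cover; conversely an exact cover gives such multiplicities. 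This is precisely the feature the three-element domain supplies: covering an element twice produces the value $2\neq 1$, which boolean union cannot detect.

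The main obstacle, and the conceptual heart of the argument, is this last equivalence: recognizing that the saturating behaviour $\min(1+1,2)=2$ converts the all-ones target into an \emph{exactly once} constraint, so that closure membership directly encodes exact cover. The remaining ingredients (the saturated multiset-sum characterization of $\Cl_{<f>}$ and the multiplicity-capping argument for $\NP$ membership) I expect to be routine once the saturated-sum viewpoint is fixed.
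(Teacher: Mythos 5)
Your proof is correct and takes essentially the same approach as the paper: both arguments use the associativity and commutativity of $f$ to characterize the closure as saturated multiset sums with multiplicities capped at $2$ (which gives the polynomial-size $\NP$ certificate), and both reduce from an exact-cover problem (the paper from EXACT-3-COVER, you from general \textsc{Exact Cover}) by exploiting that saturation at $2$ turns the all-ones target into an ``exactly once'' covering constraint. Your write-up is somewhat more explicit on the capping argument, but the ideas coincide.
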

\begin{proof}
 We reduce EXACT-3-COVER to $\Clo_{<f>}$. Let $\cS$ be an instance of EXACT-3-COVER, that is a set of subsets of $[n]$
 of size $3$. Clearly, $\cS$ can be seen as an instance of $\Clo_{<f>}$ and we prove that $\one \in \Clo_{<f>}(\cS)$ if and only if
 there is an exact cover of $\cS$. First remark that $f$ associative,
 therefore any element of $\Clo_{<f>}(\cS)$ can be written $f(v_1,f(v_2,f(v_3, \dots)$ with $v_i \in \cS$. It is also commutative
 therefore we can associate a unique element of $\Clo_{<f>}(\cS)$ to a multiset of elements of $\cS$ by the previous construction.
 Remark that it is never useful to have three times the same element in the multiset since $f(v,v) = f(v,f(v,v))$.
 If $v_i > 0$ then $f(v_i,v_i) =2$, therefore the vector $\one$ can only be generated by a set and not a multiset.
 Moreover a set which generates $\one$ satisfies that for all $i\leq n$ there is one and only one of its elements with a coefficient $1$ at the index $i$. Such a set is an exact cover of $\cS$, which proves the reduction.

 The problem is in $\NP$ because an element $v$ is in $\Clo_{<f>}(\cS)$, if and only if
 there is a multiset of elements of $\cS$ such that applying $f$ to its elements yields $v$. This witness is of polynomial size
 since each element is at most twice in the multiset.
\qed \end{proof}

This hardness result implies that we cannot use the backtrack search to solve the associated enumeration algorithm.
However, if we allow a space proportional to the number of solutions, we can still get a polynomial delay algorithm for associative functions,
a property satisfied by the function $f$ of the last proposition. Remark that the space used can be exponential while the backtrack search
only requires a polynomial space.

\begin{proposition}
 If $f$ is an associative function, then $\EnumClo_{<f>} \in \DelayP$.
\end{proposition}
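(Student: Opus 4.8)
The plan is to exploit associativity to build a directed acyclic graph (a supergraph) on the solutions and traverse it with polynomial delay, using exponential space only to record which solutions have already been output. Since $f$ is associative and we are taking the closure $\Cl_{<f>}(\cS)$, every element of the closure can be written as $f(v_1, f(v_2, f(v_3, \dots f(v_{t-1}, v_t)\dots)))$ for some sequence $v_1, \dots, v_t \in \cS$ (projections only copy coordinates and can be absorbed). Thus each solution is reachable from the elements of $\cS$ by repeatedly combining with a single generator from $\cS$.

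The idea is to define the successors of a solution $w$ to be all vectors $f(w, v)$ for $v \in \cS$, together with $f(v, w)$ if $f$ is not assumed commutative. First I would observe that the $m$ elements of $\cS$ themselves form the set of sources, and that starting from these and repeatedly applying the successor relation generates exactly $\Cl_{<f>}(\cS)$: this is immediate from associativity and the definition of the closure, since any closure element admits a left-nested decomposition whose prefixes are themselves closure elements. Each node has at most $2m$ successors, each computable in time $O(n)$, so the out-neighbourhood of any node is computable in time $O(mn)$, which is polynomial in the input size.

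Next I would run a standard graph traversal (breadth-first or depth-first) over this supergraph, maintaining a dictionary $H$ of already-discovered solutions. When a node is popped, I compute its successors, and for each successor I test membership in $H$ in time logarithmic in $|H|$ using the RAM model's efficient lookup; any successor not already present is output immediately and inserted into $H$ and the frontier. To guarantee \emph{polynomial delay} rather than merely output-polynomial time, I would use a queue-based traversal in which every node extracted from the frontier is guaranteed to be a freshly output solution, so that the work between two consecutive outputs is bounded by the cost of expanding one node, namely $O(mn)$ plus the lookup cost, which is polynomial in $n$ and $m$. This places $\EnumClo_{<f>}$ in $\DelayP$.

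The main obstacle is controlling the delay rather than just total time: a naive traversal could expand many nodes whose successors are all duplicates, producing a long gap with no output. The careful point is to ensure that each extraction from the frontier corresponds to a new solution — which is achieved because we only ever insert \emph{new} (not previously discovered) solutions into the frontier, so the frontier contains exactly the solutions awaiting expansion and every dequeue yields a distinct element of the closure. The only genuine cost outside the delay bound is the exponential space for $H$ and the frontier, which is unavoidable here and is precisely the price we pay for abandoning the polynomial-space backtrack search; I would state explicitly that this is why the preceding hardness of $\Clo_{<f>}$ does not prevent membership in $\DelayP$.
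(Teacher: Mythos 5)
Your construction is essentially the paper's: build the supergraph on $\Cl_{<f>}(\cS)$ whose arcs go from $w$ to $f(w,s)$ for $s\in\cS$ (associativity guarantees every closure element is reachable from $\cS$ along such arcs, since any product $v_1\cdots v_t$ has all its prefixes in the closure), traverse it starting from the elements of $\cS$, and use an exponential-size dictionary to avoid duplicates. Your extra arcs $f(s,w)$ for the non-commutative case are unnecessary --- right multiplication by generators already suffices, precisely because of the prefix decomposition you yourself invoke --- but they are harmless.

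The gap is in the delay accounting, which is the whole content of the proposition. Your algorithm outputs a solution at the moment it is \emph{discovered} (``any successor not already present is output immediately and inserted into $H$ and the frontier''), yet your delay bound assumes that between two consecutive outputs the algorithm performs at most one node expansion. These two statements are incompatible: with output-at-discovery, the algorithm can dequeue a node all of whose $O(m)$ successors are already in $H$, producing no output during that entire expansion, and nothing in your argument bounds the number of such barren expansions occurring consecutively. The frontier can hold exponentially many nodes, so a run of barren dequeues between two discoveries (and, worse, after the last discovery, when every remaining frontier node must still be expanded before the algorithm halts) can take time exponential in $n$. The fact you appeal to --- that every dequeue yields a distinct element of the closure --- is true, but it bounds the number of dequeues, not the time elapsed between outputs. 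The fix is one line, and it is the reading under which the paper's own $O(mn)$ bound holds: output each solution when it is \emph{extracted} from the frontier for expansion, not when it is first inserted. Every discovered node is enqueued exactly once and dequeued exactly once, so each element of the closure is still output exactly once, and now the time between two consecutive outputs is exactly the cost of one expansion, $O(mn)$ including the dictionary lookups, which gives $\EnumClo_{<f>}\in\DelayP$.
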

\begin{proof}
Let $\cS$ be an instance of $\EnumClo_{<f>}$. Let $G$ be the directed graph with vertices $\Cl_{<f>}(\cS)$
and from each $v \in \Cl_{<f>}(\cS)$, there is an arc to $f(v,s)$ for all $s \in \cS$.
Since $f$ is associative, by definition of $G$, every vertex of $\Cl_{<f>}(\cS)$ is accessible from a vertex in $\cS$.
Therefore we can do a depth-first traversal of the graph $G$ to enumerate all solutions.
A step of the traversal is in polynomial time: from an element $v$ we generate its neighborhood: $f(v,s)$ for $s \in \cS$.
The computation of $f(v,s)$ is in time $O(n)$ and $|\cS| = m$. We must also test whether the solution $f(v,s)$
has already been generated. This can be done in time $O(n)$ by maintaining a self balanced search tree containing the generated solutions,
since there are at most $|D|^{n}$ solutions.
In conclusion the delay of the enumeration algorithm is in $O(mn)$ thus polynomial.
\qed \end{proof}

To obtain a polynomial space algorithm, we could try to use the reverse search method.
To do that, we want the graph $G$ to be a directed acyclic graph, which is the case if we require the function
to be monotone. The monotonicity also ensures that the depth of $G$ is at most $n(|D|-1)$.
However we also need to be able to compute for each element of $G$ a canonical ancestor
in polynomial time and it does not seem to be easy even when $f$ is monotone.
We leave the question of finding a good property of $f$ which ensures the existence of an easy to compute ancestor open for future research.

\section{Further work}
\begin{itemize}
    \item Classify the complexity of $\Clo_{\mathcal{F}}$ for domains larger than two.
    \item Find $\mathcal{F}$ such that $\EnumClo_{\mathcal{F}}$ can be solved with a polynomial delay and space but such that $\Clo{\mathcal{F}}$ is $\NP$-hard.
    \item For set family (boolean domain) can we enumerate only the minimal elements (backtrack fails even with the symmetric difference)? It would give an enumeration algorithm of the circuits of binary matroids.
    \item What if we only allow operations between elements satisfying a given property (if the intersection is not empty for instance).
    \item Can we allow the closure function to have a different action on different coefficients and still obtain the same kind of results ?
\end{itemize}

\noindent\textbf{Acknowledgements}:
Authors have been partly supported by the ANR project Aggreg and we thank the members of the project and Mamadou Kanté for interesting
discussions about enumeration. We also thank Florent Madelaine for his help with CSP and universal algebra.
%

\bibliographystyle{splncs}
\bibliography{biblio.bib}
%
%
\end{document}